\documentclass[aps,twocolumn,floatfix]{revtex4-2}

\usepackage{amsmath,amsthm,amsfonts,amssymb,times,bbm,graphicx}
\usepackage{physics}
\usepackage{hyperref}

\newtheorem{theorem}{Theorem}

\newtheorem{lemma}[theorem]{Lemma}
\newtheorem{corollary}[theorem]{Corollary}
\newtheorem{definition}[theorem]{Definition}


\def\CC{\mathbbm{C}}

\def\NN{\mathbbm{N}}

\def\H{\mathcal{H}}

\def\Id{\mathbbm{1}}


\DeclareMathOperator{\range}{range}

\def\A{\mathcal{A}}
\def\B{\mathcal{B}}
\def\C{\mathcal{C}}
\def\D{\mathcal{D}}
\def\F{\mathcal{F}}
\def\G{\mathcal{G}}

\def\K{\mathcal{K}}

\hyphenation{modu-lus}


\begin{document}

\title{The inflation hierarchy and the polarization hierarchy are complete for the quantum bilocal scenario}

\author{Laurens T.\ Ligthart}
\author{David Gross} 
\affiliation{ Institute for Theoretical Physics, University of Cologne, Germany } 
\date{\today} 

\begin{abstract} 
	It is a fundamental but difficult problem to characterize the set of correlations that can be obtained by performing measurements on quantum mechanical systems.
	The problem is particularly challenging when the preparation procedure for the quantum states is assumed to comply with a given \emph{causal structure}.
	Recently, a first completeness result for this \emph{quantum causal compatibility problem} has been given, based on the so-called \emph{quantum inflation technique}.
	However, completeness was achieved by imposing additional technical constraints, such as an upper bound on the Schmidt rank of the observables.
	Here, we show that these complications are unnecessary in the \emph{quantum bilocal scenario}, a much-studied abstract model of entanglement swapping experiments.
	We prove that the quantum inflation hierarchy is complete for the bilocal scenario in the commuting observables model of locality.
	We also give a bilocal version of an observation by Tsirelson, namely that in finite dimensions, the commuting observables model and the tensor product model of locality coincide.
	These results answer questions recently posed by Renou and Xu.
	Finally, we point out that our techniques can be interpreted more generally as giving rise to an SDP hierarchy that is complete for the problem of optimizing polynomial functions in the states of operator algebras defined by generators and relations.
	The completeness of this \emph{polarization hierarchy} follows from a quantum de~Finetti theorem for states on maximal $C^*$-tensor products.
\end{abstract} 

\maketitle

\section{Introduction} \label{sec:intro}

Studying the correlations that can be obtained by performing measurements on spatially separated systems is central to the theory of quantum information.
Indeed, such considerations led to the development of Bell inequalities and the theory of non-locality \cite{bell1964einstein, bell2004speakable, brunner2014bell}. 
These phenomena require a theory of Nature that is fundamentally different from a classical theory.
That is, if correlations are observed that break a Bell inequality or are otherwise shown to be non-local, they cannot be derived from a model with local hidden variables.
Such correlations have indeed been observed in numerous experiments (see e.g.~Refs.~\cite{brunner2014bell,pan2012multiphoton} for reviews).

Practically, such non-local correlations have many applications in quantum information processing tasks, such as
quantum cryptography \cite{acin2007device,barrett2005no},
private random number generation \cite{pironio2010random},
entanglement detection \cite{bancal2011device}
and quantum networks \cite{branciard2012bilocal, rosset2016nonlinear, pozas2022full, tavakoli2022bell}.

This paper will focus on the \emph{bilocal scenario}.
In the bilocal scenario (Fig.~\ref{fig:bilocal_scenario}), we are concerned with the set of correlations that can be obtained by three parties (Alice, Bob, and Charlie) performing measurements on pairs of quantum particles originating from two independent sources: 
One distributing a pair between Alice and Bob, and one between Bob and Charlie.
We assume that each party can choose among a finite number of measurement settings, their choices being labeled by numbers $x,y,z$.
Each then obtains one of a finite number of possible outcomes.
We represent their respective outcomes by $\alpha,\beta,\gamma$.
The statistics of such an experiment are then described by a collection $p(\alpha\beta\gamma|xyz)$ of conditional probabilities.

The bilocal scenario is one of the most fundamental \emph{causal structures}:
It is the simplest non-trivial structure in which source states are assumed to be independent.
It is also a straightforward generalization of the Bell scenario. 
Nevertheless it allows for new behaviour such as entanglement swapping \cite{zukowski1993swapping} and is surprisingly hard to analyze.
Here, we are primarily concerned with the \emph{bilocal causal compatibility problem}:
Given a collection of conditional probabilities $p(\alpha\beta\gamma|xyz)$, decide whether it is compatible with an experiment of the form described above.

Several techniques to answer this question have already been developed.
These include, but are not limited to, (non-linear) Bell inequalities \cite{branciard2012bilocal}, 
machine learning techniques \cite{canabarro2019machine},
information-theoretic methods \cite{chaves2015information},
scalar extension \cite{pozas2019quantum, pozas2019bounding, renou2022two} and the inflation technique that is also considered in this paper \cite{wolfe2019inflation, navascues2020inflation, wolfe2021quantum, ligthart2021convergent}.
For a more complete list, both on the bilocal scenario and more general network scenarios, we refer to the excellent review of Ref.~\cite{tavakoli2022bell}.

Recently, the authors of Ref.~\cite{renou2022two} asked whether the hierarchy of semi-definite programming constraints known as the \emph{quantum inflation technique} is complete for the bilocal compatibility problem.
One of the main results of this paper, partly building on their constructions, is to answer this question in the affirmative.
We develop two complete semidefinite programming hierarchies that are closely related.
The first, which we call the \emph{polarization hierarchy}, uses symmetric product states to linearize the non-convex independence constraint.
The second is a version of the quantum inflation hierarchy, which relaxes the independence condition to a family of linear symmetry constraints.
Along the way, we obtain a number of equivalent characterizations of bilocal quantum correlations, which might be of independent interest.

\subsection{Outline}

The paper is structured as follows.
In Section \ref{sec:prelims} several technical preliminaries are explained that are useful to understand later discussions on models of locality and the algebraic formulation of quantum theory.

Section \ref{sec:Quantum correlations} discusses different notions of quantum correlations, first for the Bell scenario and then for the bilocal scenario.
Theorem \ref{thm:fact to biloc} of this section is central to proving equivalence of several of these formulations for the bilocal scenario.
Ultimately, it explains why complete hierarchies for the bilocal scenario are easier to construct than in the general case treated in Ref.~\cite{ligthart2021convergent}.

Section \ref{sec:complete hierarchy} constructs  two complete semidefinite programming hierarchies for the reduced model of the bilocal scenario.
This part of the paper can be understood as an exposition and slight adaption of the methods developed in Ref.~\cite{ligthart2021convergent} (which in turn builds on Refs.~\cite{navascues2008convergent, pironio2010convergent, wolfe2021quantum, navascues2020inflation, raggio1989quantum}).

\section{Preliminaries} \label{sec:prelims}

\subsection{Quantum models of locality}

In order to give a precise definition of the set of \emph{bilocal quantum correlations}, one needs to fix a quantum model of locality.
This turns out to be a surprisingly subtle issue.

There are two commonly used ``pictures'' on which a formalization of quantum mechanical descriptions of Nature can be based.

In elementary quantum mechanics (related to the Schr\"odinger picture), the fundamental mathematical object associated with a quantum system is a Hilbert space $\H$.
The set of observables is then derived as the algebra of bounded operators $B(\H)$ acting on $\H$.

Alternatively, in \emph{algebraic quantum mechanics} \cite{bratteli2012operator, landsman2017foundations, moretti2019fundamental} (related to  the Heisenberg picture), quantum systems are primarily described via an algebra $\A$ of observables.
A Hilbert space is then a secondary object, which can be derived e.g.\ via the GNS construction \cite{blackadar2006operator}.

The two points of view are mostly equivalent as a basis for describing natural phenomena.
Differences are commonly associated with finer technical points, e.g.\ in the rigorous description of the thermodynamic limit \cite{bratteli2012operator}.
One would thus assume that the choice of which point of view to adopt becomes a matter of taste and convenience.
While most working physicists prefer the Schr\"odinger picture, 
the algebraic model is easier to reason about algorithmically, which explains its use in completeness proofs such as those of Refs.~\cite{navascues2008convergent, pironio2010convergent, ligthart2021convergent}.

However, the two approaches suggest different formalizations of the notion of ``locality'', which is obviously relevant for the problem treated in this paper.

Indeed, consider two spatially separated subsystems $A$, $B$ of some composite system.
Separation implies that physical properties of $A$ and $B$ can be simultaneously measured, which means that the associated observable algebras $\A$, $\B$ must mutually commute, $[a,b]=0, a\in\A, b\in \B$.
In algebraic quantum mechanics, this 
assumption (sometimes referred to as \emph{Einstein locality} \cite[Sec. 8.5]{landsman2017foundations}) 
is the only one made.

In contrast, the Schr\"odinger picture-approach is to associate one Hilbert space $\H_A, \H_B$ with each subsystem and to take the observable algebras to be 
\begin{align}\label{eqn:tp algebras}
	\begin{split}
		\A &= B(\H_A)\otimes \Id \subset B(\H_A\otimes\H_B), \\
		\B &= \Id\otimes B(\H_B) \subset B(\H_A\otimes\H_B)
	\end{split}
\end{align}
respectively.

The surprisingly technically complex theory of tensor products of operator algebras \cite{takesaki1} shows that not every pair of commuting algebras can be realized on a tensor product of Hilbert spaces as in Eq.~(\ref{eqn:tp algebras}).
For a considerable time, it was an open question (known as \emph{Tsirelson's Problem} \cite{scholz2008tsirelson,junge2011connes,fritz2012tsirelson}), whether these operator-theoretic subtleties would manifest themselves at the level of finite sets of observable correlation functions (as made precise in Sec.~\ref{sec:bell}).
Unfortunately, it has now become clear that this is indeed the case \cite{ji2020mip}.
Thus, whenever one speaks about ``quantum correlations'', one has to be specific as to whether one is working in the more restrictive tensor product Hilbert space model or the more general commuting observable model.

At present, there does not seem to be strong evidence indicating which of the two approaches is more relevant for the description of natural phenomena.
Both are legitimate targets of inquiry, as long as authors indicate clearly (as we have tried to do) which model they are working with at any time.

\subsection{Algebras of observables}

For a Hilbert space $\H$, let $B(\H)$ be the set of bounded operators on $\H$.
An algebra $\A\subset B(\H)$ of operators that is closed under taking adjoints and under operator norm limits is a \emph{concrete $C^*$-algebra} \cite{blackadar2006operator, kadison1983fundamentals}.

The same way one can axiomatically define the notion of a \emph{group} as an abstraction of concrete groups of linear operators, one can also define $C^*$-algebras abstractly, without referring to a concrete Hilbert space.
We will encounter abstract $C^*$-algebras in the context of the NPO hierarchy in Sec.~\ref{sec:complete hierarchy}, but will keep the discussion of this theory at a minimum.
See Refs.~\cite{blackadar2006operator, kadison1983fundamentals, takesaki1, bratteli2012operator} for details.

A $C^*$-algebra is \emph{unital} if it contains an element acting as the identity.
All $C^*$-algebras encountered in this paper are unital, and we will not mention this property explicitly.
What is more, whenever $\A$ is a $C^*$-subalgebra of some $C^*$-algebra $\D$, we will assume that $\A$ contains the unit of $\D$.

An element $a$ of a $C^*$-algebra $\A$ is \emph{positive} if it is of the form $a=b^*b$ for some $b\in \A$.
A \emph{state} on $\A$ is a linear functional that is \emph{positive} in the sense that $\rho(b^* b) \geq 0$ for every $b\in\A$ and \emph{normalized} in the sense that $\rho(\Id) = 1$.
We denote the \emph{state space} of an algebra $\A$ by $K(\A)$.
A \emph{positive operator-valued measure} (POVM) with finitely many outcomes labeled by a variable $\alpha$ is a set $\{A_\alpha\}_\alpha\subset\A$ such that $A_\alpha$ is positive and $\sum_\alpha A_\alpha = \Id$.

Occasionally, it is necessary to use a weaker notion of convergence than the one induced by the operator norm.
Recall from elementary real analysis, that, in addition to norm convergence, the weaker concept of point-wise convergence has its role.
A non-commutative analogue of the topology of point-wise convergence is the \emph{weak operator topology}.
Here, instead of evaluating functions at points in their domain, one takes ``matrix elements'' of operators between state vectors.
More precisely, a net $a_\lambda\subset B(\H)$ converges in the weak operator topology if $\langle\phi|a_\lambda|\psi\rangle$ converges in $\CC$ for all vectors $|\phi\rangle,|\psi\rangle\in\H$.
Concrete $C^*$-algebras that are closed in the weak operator topology (rather than just the norm topology) are \emph{von~Neumann algebras}.

Weak operator closures appear naturally in the study of tensor products of operators acting on the tensor product of infinite-dimensional Hilbert spaces.
For example, consider two Hilbert spaces $\H_A, \H_B$, and define the \emph{algebraic tensor product} 
\begin{align*}
	&B(\H_A)\otimes_{\text{alg}} B(\H_B) \\
	:=&
	\Big\{
	\sum_{i=1}^n a_i \otimes b_i,
	\,\Big|\,
	n \in \NN,
		a_i \in B(\H_{A}),
		b_i \in B(\H_{B})
	\Big\}\\
	\subset& B(\H_A\otimes H_B).
\end{align*}
Based on the situation in finite dimensions, we would expect equality between
$B(\H_A)\otimes_{\text{alg}} B(\H_B)$ and 
$B(\H_A\otimes H_B)$.
This behavior is recovered for general Hilbert spaces only after taking the closure of the algebraic tensor product with respect to the weak operator topology.
In general, if $\A\subset B(\H_A), \B\subset B(\H_B)$, then the weak operator closure of the algebraic tensor product (also known as the \emph{von~Neumann tensor product}
\cite[Sec.~III.1.5]{blackadar2006operator}) 
is denoted by $\A\bar\otimes\B\subset B(\H_A\otimes\H_B)$.

\section{Quantum correlations} \label{sec:Quantum correlations}

\subsection{Two-party quantum correlations} \label{sec:bell}

As a warm-up, we can now state precisely the two well-known distinct models of \emph{two-party quantum correlations}, i.e.\ the set of conditional probabilities $p(\alpha\beta|xy)$ obtainable by two parties performing local measurements on a shared quantum state.

\begin{definition}[Two-party correlations, tensor product model]
	\label{def:bipartite tp}
	A set $p(\alpha\beta|xy)$ of conditional probabilities is a \emph{bipartite quantum distribution in the Hilbert space tensor product model} if the following holds.
	There are
	\begin{itemize}
	\item
		Hilbert spaces $\H_A, \H_B$,
	\item
		for each of Alice's settings $x$ a POVM $\{A_{\alpha|x}\}_\alpha\subset B(\H_A)$,
		and 
		for each of Bob's settings $y$ a POVM $\{B_{\beta|y}\}_\beta\subset B(\H_B)$, 
	\item
		a density operator $\rho$ on $\H_A \otimes \H_B$
	\end{itemize}
	such that
	\begin{align*}
		p(\alpha\beta|xy) = \tr(\rho \, A_{\alpha|x}\otimes B_{\beta|y}).
	\end{align*}
\end{definition}

In order to highlight the essential difference, we first give a version of the commuting observables model that is phrased as closely as possible to the tensor product model.

\begin{definition}[Two-party quantum correlations, commuting observables model]
	\label{def:bipartite commuting}
	A set $p(\alpha\beta|xy)$ of conditional probabilities is a \emph{bipartite quantum distribution in the commuting observables model} if the following holds.
	There is 
	\begin{itemize}
	\item
		a Hilbert space $\H$
	\item
		for each of Alice's settings $x$ a POVM $\{A_{\alpha|x}\}_\alpha\subset B(\H)$,
		and 
		for each of Bob's settings $y$ a POVM $\{B_{\beta|y}\}_\beta\subset B(\H)$, 
		such that all of Alice's operators commute with all of Bob's,
	\item
		a density operator $\rho$ on $\H$
	\end{itemize}
	such that
	\begin{align*}
		p(\alpha\beta|xy) = \tr(\rho \, A_{\alpha|x} B_{\beta|y}).
	\end{align*}
\end{definition}

As alluded to before, \emph{Tsirelson's problem} asked whether the two definitions characterize the same set of correlations \cite{tsirelson2007open,scholz2008tsirelson,junge2011connes,fritz2012tsirelson}.
This problem has since been answered in the negative \cite{ji2020mip}.

There is an equivalent way of characterizing the commuting observable model. 
This version refers only to the observable algebras and not directly to any Hilbert space:

\begin{definition}[Two-party quantum correlations, commuting operator model: algebraic formulation]
	\label{def:bipartite commuting algebraic}
	A set $p(\alpha\beta|xy)$ of conditional probabilities is a bipartite quantum distribution \emph{in the commuting observable model} if the following holds.
	There are 
	\begin{itemize}
	\item
		a $C^*$-algebra $\D$ 
		(of global observables),
	\item
		two mutually commuting $C^*$-subalgebras $\A, \B\subset \D$ 
		(the observables measurable by the respective parties),
	\item
		for each of Alice's settings $x$ a POVM $\{A_{\alpha|x}\}_\alpha\subset \A$,
		and 
		for each of Bob's settings $y$ a POVM $\{B_{\beta|y}\}_\beta\subset \B$, 
	\item
		a state $\rho$ on $\D$
	\end{itemize}
	such that
	\begin{align*}
		p(\alpha\beta|xy) = \rho (A_{\alpha|x} B_{\beta|y}).
	\end{align*}
\end{definition}

Proving the equivalence between Def.~\ref{def:bipartite commuting} and Def.~\ref{def:bipartite commuting algebraic} amounts to an application of the GNS construction \cite{blackadar2006operator}.

\subsection{Bilocal correlations} \label{sec:bilocal correlations}

\begin{figure}
\centering
\includegraphics[width=.7\linewidth]{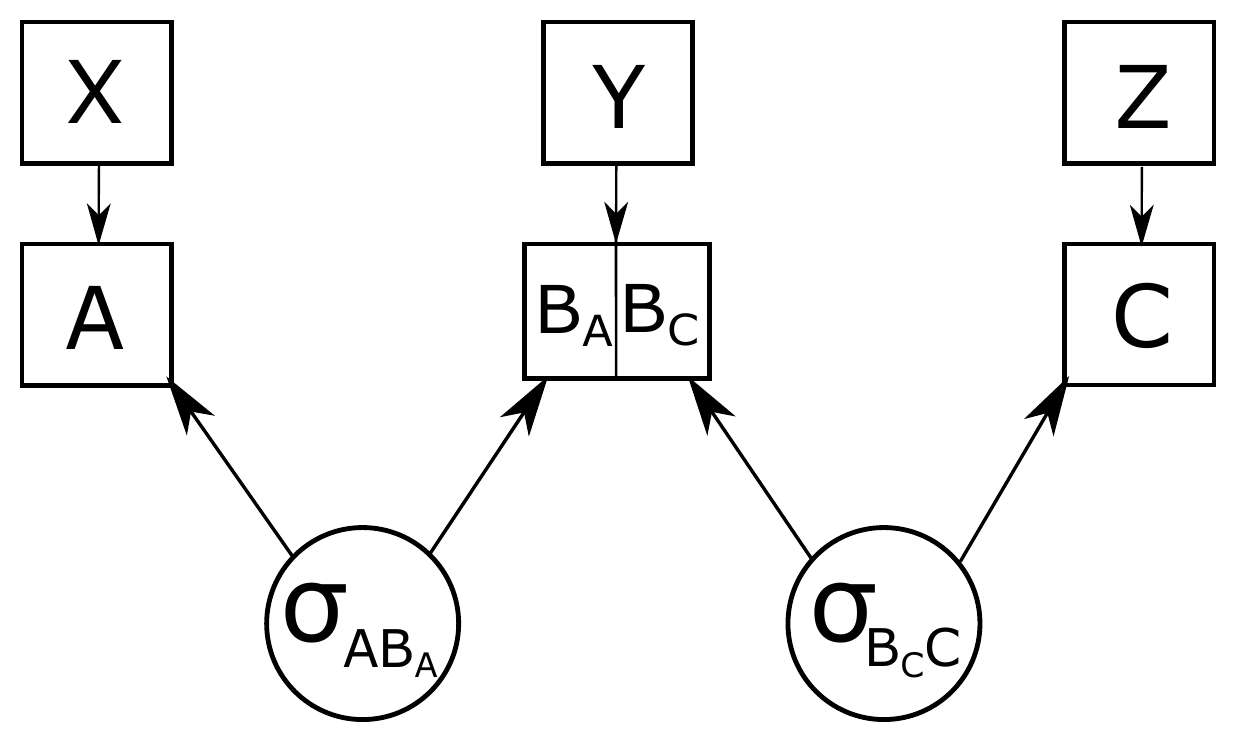} 
\caption{The bilocal scenario. Alice and Bob share a bipartite quantum state $\sigma_{A B_A}$ and Bob and Charlie share a bipartite quantum state $\sigma_{B_C C}$. 
Alice performs a measurement with the POVM $\{A_{\alpha|x}\}_\alpha$ based on the setting measurement setting $x$. 
Bob and Charlie perform a similar measurement.
The conditional probabilities $p(\alpha\beta\gamma|xyz)$ that can arise in this way are called bilocal correlations.
}
\label{fig:bilocal_scenario}
\end{figure}

When modeling locality using tensor products of Hilbert spaces, the set of bilocal correlations is defined as follows.

\begin{definition}[Tensor product model]\label{def:tensor}
	A set $p(\alpha\beta\gamma|xyz)$ of conditional probabilities is a bilocal quantum distribution \emph{in the product tensor model} if the following holds.
	There are
	\begin{itemize}
	\item
		Hilbert spaces $\H_A, \H_{B_A}, \H_{B_C}, \H_C$,
	\item
		for each of the settings $x,y,z$ POVMs 
		\begin{align*}
			\{A_{\alpha|x}\}_\alpha &\subset B(\H_A), \\
			\{B_{\beta|y}\}_\beta &\subset 
			B(\H_{B_A})\bar\otimes B( \H_{B_C}), \\
			\{C_{\gamma|z}\}_\gamma &\subset B(\H_C),
		\end{align*}
	\item
		density operators 
		$\sigma_{A B_A}$ on $\H_A\otimes\H_{B_A}$
		and
		$\sigma_{B_C C}$ on $\H_{B_C}\otimes\H_{C}$,
	\end{itemize}
	such that
	\begin{align*}
		p(\alpha\beta\gamma|xyz) = \tr\big((\sigma_{A B_A}\otimes \sigma_{B_C C})\,(A_{\alpha|x}\otimes B_{\beta|y} \otimes C_{\gamma|z})\big).
	\end{align*}
\end{definition}

In the commuting observables-model, bilocality takes on the following form:

\begin{definition}[Commuting observables model]\label{def:commuting}
	A set $p(\alpha\beta\gamma|xyz)$ of conditional probabilities is said to be a \emph{bilocal quantum distribution in the commuting observables model} if the following holds.
	There are
	\begin{itemize}
		\item 
			a $C^*$-algebra $\D$,
		\item
			mutually commuting $C^*$-subalgebras 
			$\A, \B_A, \B_C, \C\subset \D$,
	\item
		for each of the settings $x,y,z$ POVMs 
		\begin{align*}
			\{A_{\alpha|x}\}_\alpha &\subset \A, \\
			\{B_{\beta|y}\}_\beta &\subset  
			\B_A \cdot \B_C, \\
			\{C_{\gamma|z}\}_\gamma &\subset \C,
		\end{align*}
		with $\B:=\B_A \cdot \B_C$ the subalgebra of $\D$ generated by $\B_A$ and $\B_C$,
	\item
		a state $\rho$ on $\D$ that acts as a product state in the sense
		\begin{align}\label{eqn:covert}
			\rho(a b_A b_C c) = \rho(a b_A)\rho(b_C c)
		\end{align}
		for all $a\in\A,
			b_A\in\B_A,
			b_C\in\B_C,
			c\in\C$,
	\end{itemize}
	such that
	\begin{align*}
		p(\alpha\beta\gamma|xyz) = \rho (A_{\alpha|x}\, B_{\beta|y} \,  C_{\gamma|z}).
	\end{align*}
\end{definition}

\subsection{Equivalent characterizations of bilocal quantum correlations} \label{sec:equivalences}

We will give three further characterizations of the set of bilocal quantum correlations in the commuting operator model.
Some of these equivalences are integral to our completeness proof -- but they might also be of independent interest.

All statements made here are corollaries of the technical Theorem~\ref{thm:fact to biloc} proven in Sec.~\ref{sec:fac to biloc}.

\subsubsection{A reduced factorization condition}

To motivate the first reformulation in Corollary \ref{cor:reduced} below, let us try to see which part of Def.~\ref{def:commuting} might be the most difficult to work with algorithmically.
In our assessment, this is the ``hidden factorization condition'' of Eq.~(\ref{eqn:covert}).
It is ``hidden'' in the sense that it involves product operators $b_A b_C, b_A \in \B_A, b_C \in \B_C$ that \emph{need not lie in the algebra generated by the POVMs}.
But it is properties of precisely this algebra that methods building on the non-commutative polynomial optimization (NPO) hierarchy \cite{pironio2010convergent}, 
used in e.g.\ the quantum inflation method of \cite{wolfe2019inflation}, 
typically optimize over.  
As argued in more detail in Ref.~\cite[Sec. 2.5]{ligthart2021convergent}, this poses a barrier against proving completeness for such methods, including the original quantum inflation scheme.

Reference~\cite{ligthart2021convergent} circumvents this problem by explicitly adding generators for the algebras $\B_A, \B_C$ to the input of the NPO hierarchy, and expressing the POVM elements as finite-rank superpositions of those.
The price to pay for this workaround consists of additional computational costs, as well as the necessity to upper-bound this ``Schmidt rank'' of the POVM elements.

The following corollary shows that in the special case of the bilocal scenario, these difficulties can fortunately be avoided.
Indeed, the weaker factorization condition (\ref{eqn:overt}), 
involving only operators generated by the measured POVMs,
suffices to imply the \emph{a priori} more general (\ref{eqn:covert}).
We will refer to the weaker constraints as the \emph{reduced model}.

\begin{corollary}[Reduced model]\label{cor:reduced}
	A set $p(\alpha\beta\gamma|xyz)$ of conditional probabilities is bilocal in the commuting observables model of Def.~\ref{def:commuting} if and only if 
there are
	\begin{itemize}
		\item 
			a $C^*$-algebra $\D$,
		\item
			mutually commuting $C^*$-subalgebras 
			$\A, \B, \C\subset \D$,
	\item
		for each of the settings $x,y,z$ POVMs 
		\begin{align*}
			\{A_{\alpha|x}\}_\alpha &\subset \A, \\
			\{B_{\beta|y}\}_\beta &\subset  \B, \\
			\{C_{\gamma|z}\}_\gamma &\subset \C,
		\end{align*}
	\item
		a state $\rho$ on $\D$ that acts as a product state in the sense
		\begin{align}\label{eqn:overt}
			\rho(a  c) = \rho(a )\rho( c)
		\end{align}
		for all $a\in\A,
			c\in\C$,
	\end{itemize}
	such that
	\begin{align*}
		p(\alpha\beta\gamma|xyz) = \rho (A_{\alpha|x}\, B_{\beta|y} \,  C_{\gamma|z}).
	\end{align*}
\end{corollary}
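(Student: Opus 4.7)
The proof splits into an easy forward implication and a more delicate converse. For the forward direction, suppose we are given a bilocal model in the sense of Def.~\ref{def:commuting}. I define $\B := \B_A \cdot \B_C$, the $C^*$-subalgebra of $\D$ generated by $\B_A$ and $\B_C$. Since each of $\B_A$ and $\B_C$ commutes with both $\A$ and $\C$, so does $\B$, and the POVM elements $B_{\beta|y}$ already lie in $\B$ by hypothesis. Specializing (\ref{eqn:covert}) to $b_A = b_C = \Id$ yields (\ref{eqn:overt}), while the expression for the probabilities transfers unchanged.

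For the converse I would invoke Theorem~\ref{thm:fact to biloc}. Concretely, starting from a reduced model $(\D, \A, \B, \C, \rho)$ with weak factorization $\rho(ac) = \rho(a)\rho(c)$, the task is to produce an ambient algebra $\tilde\D$ containing mutually commuting subalgebras $\A, \B_A, \B_C, \C$, together with a state $\tilde\rho$ satisfying the strong factorization (\ref{eqn:covert}) and reproducing the full joint distribution via POVMs embedded into $\A$, $\B_A \cdot \B_C$, and $\C$ respectively. The first construction I would attempt is to take $\tilde\D$ to be the maximal $C^*$-tensor product $(\A \cdot \B) \otimes_{\max} (\B \cdot \C)$, so that the two factors provide the required commuting copies of $\B$, and to try $\tilde\rho = \rho|_{\A\cdot\B} \otimes \rho|_{\B\cdot\C}$ with a single POVM copy $B_{\beta|y} \otimes \Id$.

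The main obstacle is that this naive embedding yields $\tilde\rho(A_{\alpha|x} B_{\beta|y} C_{\gamma|z}) = \rho(A_{\alpha|x} B_{\beta|y})\rho(C_{\gamma|z})$, which generally does not equal $\rho(A_{\alpha|x} B_{\beta|y} C_{\gamma|z})$ because (\ref{eqn:overt}) does not imply factorization across $\A \cdot \B$ versus $\C$. A correct embedding must therefore spread each $B_{\beta|y}$ across both $\B_A$ and $\B_C$ so that the additional correlations mediated by $\B$ are faithfully recovered. My expectation is that Theorem~\ref{thm:fact to biloc} achieves this through a GNS-type construction: one realizes $(\D,\rho)$ on a Hilbert space $\H$ via a cyclic vector $\Omega$, uses the weak factorization to identify the distinguished tensor decomposition $\overline{\pi(\A)\pi(\C)\Omega} \cong \H_\A \otimes \H_\C$, and then enlarges this space to support two commuting actions of $\B$---one ``aligned'' with each factor---whose joint image recovers the original $\pi(\B)$. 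The core technical difficulty is exhibiting such an enlargement together with a state that genuinely implements the stronger factorization while leaving the marginal on $\A \cdot \B \cdot \C$ intact; this is precisely the content that Theorem~\ref{thm:fact to biloc} must supply.
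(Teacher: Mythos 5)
Your forward direction is correct and matches the paper: specializing Eq.~(\ref{eqn:covert}) to $b_A=b_C=\Id$ gives Eq.~(\ref{eqn:overt}), and $\B:=\B_A\cdot\B_C$ inherits the commutation properties. Your converse correctly identifies that one should invoke Theorem~\ref{thm:fact to biloc}, and your diagnosis of why the naive $\otimes_{\max}$ construction fails is exactly right. So the overall skeleton is the same as the paper's.

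However, your guess for \emph{how} Theorem~\ref{thm:fact to biloc} works is off in a way that would matter if you tried to fill in the details. You write that one should ``enlarge this space to support two commuting actions of $\B$---one aligned with each factor---whose joint image recovers the original $\pi(\B)$.'' That is not what the theorem does, and attempting it would be doomed: there is no reason a single representation of $\B$ should split into two commuting $*$-homomorphic copies. The actual construction takes $\B_A$ and $\B_C$ to be the \emph{commutants} $\pi_A(\A)'$ and $\pi_C(\C)'$ in the separate GNS spaces $\H_{AB_A}$, $\H_{B_CC}$ built from the restrictions of $\rho$ to $\A$ and to $\C$. The weak factorization furnishes an isometry $V:\H_{AB_A}\otimes\H_{B_CC}\to\H_\D$ onto the $\A\C$-cyclic subspace of the global GNS space, and one sets $\Lambda(b)=V^*\pi_\D(b)V$. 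This $\Lambda$ is only a completely positive unital map, \emph{not} a $*$-homomorphism, and its image does not recover $\pi(\B)$. That is perfectly adequate for the corollary: CP unital maps send POVMs to POVMs, and the theorem shows that the state $|\Omega_A\rangle\langle\Omega_A|\otimes|\Omega_C\rangle\langle\Omega_C|$ evaluated on $\pi_A(a)\Lambda(b)\pi_C(c)$ reproduces $\rho(abc)$. Since this state is manifestly a product across $\A\B_A\,|\,\B_C\C$, the strong factorization (\ref{eqn:covert}) holds and the commuting observables model follows. So the correct mental model is ``commutants plus CP compression,'' not ``two commuting copies of $\B$.''
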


We note that the reduced model arises implicitly from the \emph{factorisation bilocal NPA hierarchy} of \cite{renou2022two}.

\subsubsection{Bilocal Tsirelson Problem}

The second reformulation, specified in Corollary~\ref{cor:mixed} below, clarifies the differences between the Hilbert space tensor product model and the commuting operator model of bilocal correlations.

The two approaches are obviously different:
To see this, one can simply embed the two-party  scenario into the bilocal one,
e.g.\ by taking the $A$ system or $C$ system to be trivial.

It could be surmised that there are ``genuine bilocal differences'' between the two approaches, and that the bilocal scenario could teach us about Tsirelson's Problem in a way that goes beyond the two-party case.
We will, however, show that this is not the case.
More precisely, consider the \emph{mixed model} formalized below, where the bipartition $AB_A|B_CC$ is described by a Hilbert space tensor product, while all we can say about the bipartitions $A|B_A$ and $B_C|C$ is that they are associated with commuting observable algebras.

\begin{corollary}[Mixed model]
	\label{cor:mixed}
	A set $p(\alpha\beta\gamma|xyz)$ of conditional probabilities is bilocal in the commuting observables model of Def.~\ref{def:commuting} if and only if 
	there are
	\begin{itemize}
		\item
		Hilbert spaces $\H_{A B_A}, \H_{B_C C}$,
		\item
		mutually commuting 
	  $C^*$-algebras	
		\begin{align*}
			 \A,  \B_A \subset B(\H_{A B_A}), 
			\qquad 
			 \B_C,  \C \subset B(\H_{B_C C}),
		\end{align*}
	  \item
		for each of the settings $x,y,z$ POVMs
		\begin{align*}
			\{ A_{\alpha|x} \}_\alpha &\subset  \A, \\
			\{ B_{\beta|y} \}_\beta &\subset  \B_A\bar\otimes\B_C, \\
			\{ C_{\gamma|z} \}_\gamma &\subset  \C,
		\end{align*}
		\item
			density operators 
			$\sigma_{A B_A}$ on $\H_{A B_A}$ and
			$\sigma_{B_C C}$ on $\H_{B_C C}$,
	\end{itemize}
	such that
	\begin{align*}
		p(\alpha\beta\gamma|xyz)
		&=
		\tr\big((\sigma_{A B_A}\otimes \sigma_{B_C C})\,(A_{\alpha|x}\, B_{\beta|y} \, C_{\gamma|z})\big),
	\end{align*}
	where all operators act on $B(\H_{A B_A} \otimes\H_{B_C C})$ in the natural way.
\end{corollary}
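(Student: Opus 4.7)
\emph{Plan.}
The mixed-to-commuting direction is immediate: given the mixed-model data, set $\D:=B(\H_{A B_A}\otimes\H_{B_C C})$ with state $\rho(\cdot):=\tr\bigl((\sigma_{A B_A}\otimes\sigma_{B_C C})(\cdot)\bigr)$, embed $\A,\B_A\hookrightarrow B(\H_{A B_A})\otimes\Id$ and $\B_C,\C\hookrightarrow\Id\otimes B(\H_{B_C C})$, and observe that the tensor-product structure of $\sigma_{A B_A}\otimes\sigma_{B_C C}$ delivers the product-state identity~\eqref{eqn:covert}.

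For the converse, start from a realization $(\D,\A,\B_A,\B_C,\C,\rho)$ in the commuting observables model of Def.~\ref{def:commuting} and set $\D_1:=\A\cdot\B_A$, $\D_2:=\B_C\cdot\C$. Mutual commutation of the four subalgebras gives $[\D_1,\D_2]=0$, and bilinearity extends~\eqref{eqn:covert} to the full product-state condition $\rho(d_1 d_2)=\rho(d_1)\rho(d_2)$ on $\D_1\times\D_2$. Assume without loss of generality that $\D$ is generated by the four subalgebras, so that commutativity lets every monomial in the generators be rewritten as a product $d_1 d_2$. Apply the GNS construction to $(\D,\rho)$ to obtain a cyclic triple $(\pi,\H_\rho,|\Omega\rangle)$, set $\H_i:=\overline{\pi(\D_i)|\Omega\rangle}\subset\H_\rho$ with cyclic vector $|\Omega_i\rangle:=|\Omega\rangle$ (so that the restriction $\pi_i$ of $\pi$ to $\H_i$ realizes the GNS triple of $(\D_i,\rho|_{\D_i})$), and consider the map
\begin{align*}
V:\H_1\otimes\H_2&\to\H_\rho, \\
\pi_1(d_1)|\Omega_1\rangle\otimes\pi_2(d_2)|\Omega_2\rangle&\mapsto\pi(d_1 d_2)|\Omega\rangle.
\end{align*}
The heart of the proof is to show that $V$ extends to a unitary isomorphism. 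Well-definedness and isometry both follow from the single calculation $\langle\pi(d_1 d_2)\Omega|\pi(d_1' d_2')\Omega\rangle=\rho(d_2^* d_1^* d_1' d_2')=\rho(d_1^* d_1'\cdot d_2^* d_2')=\rho(d_1^* d_1')\,\rho(d_2^* d_2')$, where the middle equality uses $[\D_1,\D_2]=0$ and the last invokes the product-state condition; surjectivity follows from cyclicity of $|\Omega\rangle$ together with the monomial-rewriting observation above.

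Transporting $\pi$ through $V$ yields a representation on $\H_1\otimes\H_2$ in which $\pi(\A),\pi(\B_A)$ act as $X\otimes\Id$, $\pi(\B_C),\pi(\C)$ act as $\Id\otimes Y$, and $\pi(B_{\beta|y})=\sum_i\pi_1(b_A^{(i)})\otimes\pi_2(b_C^{(i)})\in\pi_1(\B_A)\bar\otimes\pi_2(\B_C)$. Choosing $\H_{A B_A}:=\H_1$, $\H_{B_C C}:=\H_2$, $\sigma_{A B_A}:=|\Omega_1\rangle\langle\Omega_1|$, $\sigma_{B_C C}:=|\Omega_2\rangle\langle\Omega_2|$, and the commuting pairs $\pi_1(\A),\pi_1(\B_A)\subset B(\H_1)$, $\pi_2(\B_C),\pi_2(\C)\subset B(\H_2)$ then realizes $p(\alpha\beta\gamma|xyz)$ in exactly the form demanded by the mixed model. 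The main technical obstacle I foresee is the non-uniqueness of the decomposition $B_{\beta|y}=\sum_i b_A^{(i)} b_C^{(i)}$ as a sum of products from $\B_A\cdot\B_C$; this is resolved by the fact that $\pi$ is a $*$-homomorphism on all of $\D$, so $\pi(B_{\beta|y})$ and hence its conjugate $V^{-1}\pi(B_{\beta|y})V\in\pi_1(\B_A)\bar\otimes\pi_2(\B_C)$ are canonical, independent of which decomposition is used to witness membership in $\B=\B_A\cdot\B_C$.
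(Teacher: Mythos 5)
Your proof is correct, but it takes a genuinely different (and more direct) route than the paper. You work straight from Def.~\ref{def:commuting}: you set $\D_1:=\A\cdot\B_A$, $\D_2:=\B_C\cdot\C$, extend the product-state condition~\eqref{eqn:covert} to $\rho(d_1 d_2)=\rho(d_1)\rho(d_2)$ by bilinearity and continuity, apply GNS to $\D_1,\D_2$, and obtain a \emph{unitary} $V:\H_1\otimes\H_2\to\H_\rho$. This is clean and self-contained, and the verification that $V^*\pi(\cdot)V$ lands in the right slots is easy because every element of $\D$ is (a norm limit of) a product $d_1 d_2$. The paper instead routes all models through Cor.~\ref{cor:reduced} and proves Theorem~\ref{thm:fact to biloc} under the much weaker hypothesis that only $\rho(ac)=\rho(a)\rho(c)$ for $a\in\A$, $c\in\C$ -- with no $\B_A,\B_C$ given in advance and $\B$ merely required to commute with $\A$ and $\C$. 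There the GNS construction is applied to $\A$ alone, the resulting $V$ is only an isometry (its range is $\overline{\pi_\D(\A\C)|\Omega_\D\rangle}$, which need not be all of $\H_\D$), $\B_A$ is \emph{defined} as the commutant of $\pi_A(\A)$, and the real work is Lemma~10's bicommutant/commutation-theorem argument showing $V^*\pi_\D(\B)V\subset\B_A\bar\otimes\B_C$. What the paper's harder route buys is precisely Cor.~\ref{cor:reduced}: the reduced model only has access to the algebras generated by the measured POVMs, which is what the NPO/inflation hierarchy can actually constrain, so the weaker hypothesis is what makes the completeness proof in Sec.~\ref{sec:complete hierarchy} go through. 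Your proof establishes Cor.~\ref{cor:mixed} as stated, but it does not yield Cor.~\ref{cor:reduced} as a byproduct, since you rely on the factorization across all of $\D_1\times\D_2$ rather than just across $\A\times\C$.
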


\subsubsection{The Renou-Xu formulation}

Finally, we consider the formulation used in Ref.~\cite{renou2022two}.
It could be described as the bilocal analogue of Def.~\ref{def:bipartite commuting}, in the sense that it formalizes a commuting observables-model while avoiding to explicitly introduce the local observable algebras.
While in the two-party  case, the equivalence of Def.~\ref{def:bipartite commuting} and Def.~\ref{def:bipartite commuting algebraic} was a direct consequence of the GNS construction, the relation between the Renou-Xu model and commuting operator models defined above may not be as obvious.
However, we will show:

\begin{corollary}[Renou-Xu model \cite{renou2022two}]
	\label{cor:renou-xu}
	A set $p(\alpha\beta\gamma|xyz)$ of conditional probabilities is bilocal in the commuting observables model of Def.~\ref{def:commuting} if and only if 
	there are 
	\begin{itemize}
		\item
		a Hilbert space $\H$,
		\item
			commuting projection operators $P, Q\in B(\H)$, such that $PQ$ is a normalized rank-one projection,
	  \item
			for each of Alice's settings $x$, a 
			POVM $\{A_{\alpha|x}\}_\alpha\subset B(\H)$,
			and likewise for Bob and Charlie, such that:
			(1) operators belonging to different parties commute,
			and (2)
		\begin{align*}
			[ A_{\alpha|x}, Q]=
			[P,  C_{\gamma|z}]=0,
		\end{align*}
	\end{itemize}
	such that
	\begin{align*}
		p(\alpha\beta\gamma|xyz)
		&=
		\tr\big( PQ
		\,  A_{\alpha|x}  B_{\beta|y}  C_{\gamma|z} \big).
	\end{align*}
\end{corollary}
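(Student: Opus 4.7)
The corollary asserts an equivalence, which I plan to establish in two directions, invoking the preceding corollaries (already assumed) to avoid redundant work.

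For the direction Renou--Xu $\Rightarrow$ Def.~\ref{def:commuting}, my target will be the reduced model of Cor.~\ref{cor:reduced}. The construction is essentially tautological: take $\D = B(\H)$, let $\A, \B, \C$ be the $C^*$-subalgebras generated by the three parties' POVMs (mutually commuting by hypothesis), and define the state $\rho(X) = \tr(PQ\, X)$. Because $PQ$ is a normalized rank-one projection, I may write $PQ = |\psi\rangle\langle\psi|$ for a unit vector that automatically satisfies $P|\psi\rangle = Q|\psi\rangle = |\psi\rangle$. The only nontrivial thing to check is the reduced factorization $\rho(ac) = \rho(a)\rho(c)$ for $a\in\A$, $c\in\C$. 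The key observation is that $c|\psi\rangle \in \range(P)$ (because $Pc|\psi\rangle = cP|\psi\rangle = c|\psi\rangle$ by $[P,c]=0$) and, dually, $a^*|\psi\rangle \in \range(Q)$ (using that $[a,Q]=0$ together with $Q=Q^*$ yields $[a^*,Q]=0$). This lets me insert a copy of $PQ$ between the two operators:
\begin{align*}
\langle\psi|ac|\psi\rangle
&= \langle a^*\psi | c\psi\rangle
= \langle a^*\psi | QP | c\psi\rangle\\
&= \langle a^*\psi | PQ | c\psi\rangle
= \langle\psi|a|\psi\rangle \, \langle\psi|c|\psi\rangle.
\end{align*}

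For the converse direction, I will first apply Cor.~\ref{cor:mixed} to put the distribution in the mixed model, yielding Hilbert spaces $\H_{AB_A}, \H_{B_C C}$, commuting algebras on each, and density operators $\sigma_{AB_A}, \sigma_{B_C C}$. I then purify: choose auxiliary Hilbert spaces $\tilde\H_L, \tilde\H_R$ and purifications $|\psi_L\rangle \in \H_{AB_A}\otimes\tilde\H_L$, $|\psi_R\rangle \in \H_{B_C C}\otimes\tilde\H_R$. Set $\H = \H_{AB_A}\otimes\tilde\H_L\otimes\H_{B_C C}\otimes\tilde\H_R$, take $|\psi\rangle = |\psi_L\rangle\otimes|\psi_R\rangle$, and define
\begin{align*}
P &= |\psi_L\rangle\langle\psi_L|\otimes\Id_{\H_{B_C C}\otimes\tilde\H_R},\\
Q &= \Id_{\H_{AB_A}\otimes\tilde\H_L}\otimes|\psi_R\rangle\langle\psi_R|.
\end{align*}
Then $PQ = |\psi\rangle\langle\psi|$ is rank-one and $[P,Q]=0$ by tensor-factor separation. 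The critical commutators $[A,Q] = 0$ and $[P,C] = 0$ hold because Alice (resp.\ Charlie) acts only on $\H_{AB_A}$ (resp.\ $\H_{B_C C}$), which is disjoint from the support of $Q$ (resp.\ $P$). Commutation of operators belonging to distinct parties is inherited from the mixed model, and the equality of probabilities follows from the matrix-element identity $\langle\psi|ABC|\psi\rangle = \tr\!\big((\sigma_{AB_A}\otimes\sigma_{B_C C})\,ABC\big)$.

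The main obstacle will be bookkeeping in the converse direction, since Bob's POVM elements lie in the von~Neumann tensor product $\B_A\bar\otimes\B_C$ rather than the algebraic one. Consequently, the commutation of Bob's operators with those of Alice and Charlie, and the matrix-element computation of the probabilities, must be verified first on elementary tensors $b_A\otimes b_C$ (where the mixed-model hypotheses apply directly) and then extended to all of $\B_A\bar\otimes\B_C$ by weak-operator continuity. Nothing deep happens here, but care is needed to track how the four tensor factors of $\H$ interact with operators naturally defined on subsets of them.
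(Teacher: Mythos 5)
Your proof is correct and follows essentially the same route as the paper: Renou--Xu $\Rightarrow$ reduced by a short computation exploiting $P^2=P$, $Q^2=Q$, $[P,Q]=0$, and the commutation of $P$ with $\C$ and $Q$ with $\A$; and mixed $\Rightarrow$ Renou--Xu by taking $P,Q$ to be rank-one projectors tensored with identities on the two tensor factors. The only cosmetic difference is that you purify the mixed-model density operators $\sigma_{AB_A},\sigma_{B_CC}$, whereas the paper bypasses purification by observing that the mixed model produced in its Step~2 (via Theorem~\ref{thm:fact to biloc}) already comes with pure vector states $|\Omega_A\rangle\langle\Omega_A|$, $|\Omega_C\rangle\langle\Omega_C|$.
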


\subsection{The finite-dimensional case}

In the two-party case, the distinction between the tensor product model and the commuting observable model ceases to exist if either can be realized in finite dimensions.
Reference~\cite{renou2022two} asked whether the same is true for the bilocal scenario.
Here, we answer this question in the affirmative.
In fact, the equivalence already holds when both Alice and Charlie can be associated with a finite-dimensional system.

\begin{corollary}
\label{cor:finite}
	Assume $p(\alpha\beta\gamma|xyz)$ is compatible with \emph{any} of the models given in
	Def.~\ref{def:tensor}, 
	Def.~\ref{def:commuting}, 
	Cor.~\ref{cor:reduced}, 
	Cor.~\ref{cor:mixed}, 
	Cor.~\ref{cor:renou-xu},
	and is such that the $C^*$-algebra 
	generated by Alice's and Charlie's POVMs are finite-dimensional. 

	Then $p(\alpha\beta\gamma|xyz)$ is compatible with \emph{all} these models, and all operator algebras and Hilbert spaces can be chosen to be finite-dimensional.
\end{corollary}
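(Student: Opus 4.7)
The five models are already known to be equivalent as sets of correlations by the results preceding this corollary (Cor.~\ref{cor:reduced}, \ref{cor:mixed}, \ref{cor:renou-xu}), together with the trivial inclusion of Def.~\ref{def:tensor} into Def.~\ref{def:commuting}. Since any finite-dimensional realization in Def.~\ref{def:tensor} immediately induces finite-dimensional realizations in the others---Def.~\ref{def:commuting} and the reduced and mixed models by inspection, and the Renou-Xu form of Cor.~\ref{cor:renou-xu} by purifying each factor and setting $P:=|\Psi_L\rangle\langle\Psi_L|\otimes\Id$, $Q:=\Id\otimes |\Psi_R\rangle\langle\Psi_R|$---the task reduces to this: if $p$ admits any commuting-observables realization with $\A$ and $\C$ finite-dimensional, then $p$ is also realized by Def.~\ref{def:tensor} with all four local Hilbert spaces finite-dimensional.

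I would start from the mixed model (Cor.~\ref{cor:mixed}), which already supplies the outer tensor product between the $AB_A$-side and the $B_CC$-side, and perform a Tsirelson-style decomposition on each side. By the structure theorem for finite-dimensional $C^*$-algebras acting on a Hilbert space, one has $\H_{AB_A}=\bigoplus_k(\K_k\otimes \mathcal{L}_k)$ with $\A=\bigoplus_k B(\K_k)\otimes\Id$ and $\B_A\subset\bigoplus_k \Id\otimes B(\mathcal{L}_k)$. The minimal central projections of $\A$ commute with every operator appearing in the correlation expression, so after replacing $\sigma_{AB_A}$ by its block-diagonal part (which does not affect any correlation) one may embed $\H_{AB_A}$ isometrically into the genuine tensor product $\H_A\otimes\H_{B_A}^{(1)}$ with $\H_A:=\bigoplus_k\K_k$ \emph{finite-dimensional} and $\H_{B_A}^{(1)}:=\bigoplus_k\mathcal{L}_k$ a priori still infinite-dimensional. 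In this embedding $\A\subset B(\H_A)\otimes\Id$ and $\B_A\subset\Id\otimes B(\H_{B_A}^{(1)})$. Applying the same construction on Charlie's side yields a realization in Def.~\ref{def:tensor} with finite-dimensional $\H_A$, $\H_C$, but with $\H_{B_A}^{(1)}$, $\H_{B_C}^{(1)}$ still possibly infinite-dimensional.

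The hardest step is compressing Bob to finite dimension; the key observation is that the finite-dimensionality of Alice and Charlie already bounds the effective size of Bob. I would purify $\sigma_{AB_A}$ to a vector $|\Psi_L\rangle\in \H_A\otimes\H_{B_A}^{(1)}\otimes\H_{\mathrm{anc,L}}$; the Schmidt decomposition across the bipartition $\H_A$ versus the rest has at most $\dim\H_A$ nonzero terms, so $|\Psi_L\rangle\in\H_A\otimes\H_{B_A}$ for some subspace $\H_{B_A}\subset\H_{B_A}^{(1)}\otimes\H_{\mathrm{anc,L}}$ of dimension at most $\dim\H_A$. An analogous construction on Charlie's side produces a finite-dimensional $\H_{B_C}$. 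Finally, I replace each of Bob's POVM elements $B_{\beta|y}$ (extended by identity on the ancillas) by its compression $\tilde B_{\beta|y}:=\Pi^*(B_{\beta|y}\otimes\Id)\Pi$, where $\Pi:\H_{B_A}\otimes\H_{B_C}\hookrightarrow(\H_{B_A}^{(1)}\otimes\H_{\mathrm{anc,L}})\otimes(\H_{B_C}^{(1)}\otimes\H_{\mathrm{anc,R}})$ is the inclusion isometry. Positivity is preserved, and $\sum_\beta\tilde B_{\beta|y}=\Pi^*\Pi=\Id$, so these compressed operators form a genuine POVM on the finite-dimensional space $\H_{B_A}\otimes\H_{B_C}$. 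Since $|\Psi_L\rangle\otimes|\Psi_R\rangle$ is supported in the range of $\Pi$, the correlations $p(\alpha\beta\gamma|xyz)$ are unchanged, and we obtain the desired finite-dimensional realization in Def.~\ref{def:tensor}; by the remarks of the first paragraph, this also delivers finite-dimensional realizations in all the other models.
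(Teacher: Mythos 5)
Your proof is correct, but it takes a genuinely different route from the paper's, and it involves more work than necessary because it misses the paper's key observation.

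The paper's proof is short precisely because of the way Cor.~\ref{cor:mixed} is established: in the proof of the equivalences, $\H_{AB_A}$ and $\H_{B_CC}$ are constructed as the GNS Hilbert spaces of $\A$ and $\C$ under the state $\rho$. The dimension of a GNS space is bounded by the dimension of the underlying $C^*$-algebra, so once $\A$ and $\C$ are finite-dimensional, \emph{both} mixed-model Hilbert spaces are already finite-dimensional, with no further work. One then simply applies Lem.~\ref{lem:tsirelson finite} (the Scholz--Werner Tsirelson lemma) on each side, to a finite-dimensional ambient Hilbert space, and all four tensor factors come out finite-dimensional automatically. You instead start from an arbitrary mixed-model realization whose Hilbert spaces may be infinite-dimensional, reprove the Tsirelson-type splitting by hand via the structure theorem for finite-dimensional $C^*$-algebras (this is essentially a re-derivation of Lem.~\ref{lem:tsirelson finite}, extended to allow an infinite-dimensional multiplicity factor), and then need the extra purification/Schmidt-rank/compression step to bring Bob's spaces down to $\dim\H_{B_A}\le\dim\H_A$ and $\dim\H_{B_C}\le\dim\H_C$. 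What your argument buys is independence from the specific GNS construction -- it works for \emph{any} mixed-model realization with $\A,\C$ finite-dimensional -- and it makes the effective bound on Bob's dimension explicit. What it loses is brevity: the paper's route makes the compression step unnecessary.

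Two small points worth fixing. First, your opening sentence ``the five models are already known to be equivalent as sets of correlations'' is not correct: the tensor-product model of Def.~\ref{def:tensor} is \emph{a priori} strictly smaller than the other four, and establishing equality in the finite-dimensional case is exactly the content of the corollary. You do identify the correct crux immediately afterwards, so this is only a slip in exposition. Second, the statement ``in this embedding $\A\subset B(\H_A)\otimes\Id$'' should be read as saying each $a\in\A$ admits an extension $\hat a\in B(\H_A)\otimes\Id$ intertwining the isometry ($\hat a\,\iota = \iota\, a$); the literal conjugate $\iota a\iota^*$ is not a tensor factor when $\A$ has more than one central block. This is the same convention implicit in Lem.~\ref{lem:tsirelson finite}, so it is not a substantive error, but worth being explicit about.
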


\subsection{Proof of the equivalences}
\label{sec:fac to biloc}

The claimed equivalences derive from the following theorem.
We state it in general terms (i.e.\ not yet specific to the various models of bilocality).

Recall 
\cite[Sec.~II.6.4]{blackadar2006operator}.
that the GNS construction associates with every $C^*$-algebra $\F$ and state $\sigma\in K(\F)$ a  triple $(\H, \pi, \ket{\Omega})$, where $\H$ is a Hilbert space, $\pi: \F\to B(\H)$ a $*$-representation, 
and $\ket{\Omega}\in\H$ a cyclic vector that implements the state in the sense that $\sigma(f)=\langle\Omega|\pi(f)|\Omega\rangle$ for every $f\in\F$.

\begin{theorem}
	\label{thm:fact to biloc}
	Let ${\A}, {\B}, {\C}$ be mutually commuting
 	$C^*$-subalgebras
	of some 
	C$^*$-algebra ${\D}$.
	Let $\rho$ be a state on ${{\D}}$ such that
	\begin{align}\label{eqn:factorization}
		\rho( a  c) = \rho(a )\rho( c)
		\qquad\forall a\in {\A}, c\in {\C}.
	\end{align}

	Let $(\H_{{A B_A}}, \pi_{{A}}, |\Omega_{{A}}\rangle)$ be the GNS representation of ${\A}$ associated with the state $\rho$.
	Let ${{\B}}_{{A}}$ be the commutant of $\pi_{{A}}({\A})$ in $B(\H_{AB_A})$. 
	Define $(\H_{B_C C}, \pi_{{C}}, |\Omega_{{C}}\rangle)$ and 
	${{\B}}_{{C}}\subset B(\H_{B_C C})$ analogously.

	Then there exists a completely positive unital map
	\begin{align*}
		\Lambda: {\B} \to {{\B}}_{{A}} \bar\otimes {{\B}}_{{C}}\subset B(\H_{A B_A}\otimes \H_{B_C C})
	\end{align*}
	such that
	for all $a\in{\A}, b\in{\B}, c\in{\C}$ 
	\begin{align*}
		\rho(abc)
		&=
		\tr\big( |\Omega_{{A}}\rangle\langle\Omega_{{A}}|\otimes|\Omega_{{C}}\rangle\langle\Omega_{{C}}|\,
		\, \pi_{{A}}(a) \Lambda(b) \pi_{{C}}(c) \big),
	\end{align*}
	where all operators act on $B(\H_{A B_A} \otimes\H_{B_C C})$ in the natural way.
\end{theorem}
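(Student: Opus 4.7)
The plan is to realise $\Lambda$ as the compression of $\pi_\D(b)$ by an isometry that glues the GNS representations of $\A$ and of $\C$ together into the global GNS representation of $\rho$ on $\D$. Let $(\H_\D,\pi_\D,|\Omega_\D\rangle)$ denote the GNS triple of $(\D,\rho)$. Since $\langle\Omega_\D|\pi_\D(a)|\Omega_\D\rangle=\rho(a)$ for $a\in\A$, the closed subspace $\overline{\pi_\D(\A)|\Omega_\D\rangle}\subset\H_\D$ equipped with $\pi_\D|_\A$ and the vector $|\Omega_\D\rangle$ is a cyclic $*$-representation of $\A$ implementing $\rho|_\A$. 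By the uniqueness of the GNS construction it is unitarily equivalent to $(\H_{A B_A},\pi_A,|\Omega_A\rangle)$, and the analogous identification holds for $\C$. I will use these intertwiners to view $\H_{AB_A}$ and $\H_{B_CC}$ as subspaces of $\H_\D$.

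The factorization hypothesis (\ref{eqn:factorization}) is precisely what is needed to combine the two subrepresentations into a tensor-product embedding. I would define
\begin{equation*}
    V:\H_{A B_A}\otimes\H_{B_C C}\to\H_\D,\quad V\bigl(\pi_A(a)|\Omega_A\rangle\otimes\pi_C(c)|\Omega_C\rangle\bigr):=\pi_\D(ac)|\Omega_\D\rangle
\end{equation*}
on simple tensors and extend by linearity. Both well-definedness and isometricity reduce to the single computation
\begin{equation*}
    \langle\pi_\D(a_2c_2)\Omega_\D|\pi_\D(a_1c_1)\Omega_\D\rangle=\rho(a_2^*a_1\,c_2^*c_1)=\rho(a_2^*a_1)\rho(c_2^*c_1),
\end{equation*}
where the first equality uses that $\A$ commutes with $\C$ inside $\D$ and the second invokes (\ref{eqn:factorization}). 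A short further argument then yields the intertwining relations $\pi_\D(a)V=V(\pi_A(a)\otimes\Id)$ and $\pi_\D(c)V=V(\Id\otimes\pi_C(c))$, and in particular $V^*\pi_\D(a)V=\pi_A(a)\otimes\Id$, $V^*\pi_\D(c)V=\Id\otimes\pi_C(c)$.

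I then set $\Lambda(b):=V^*\pi_\D(b)V$. Complete positivity and unitality are automatic, since $\Lambda$ is the composition of the $*$-representation $\pi_\D|_\B$ with an isometric compression and $V^*V=\Id$. The formula for $\rho(abc)$ follows by writing $|\Omega_\D\rangle=V(|\Omega_A\rangle\otimes|\Omega_C\rangle)$, moving $\pi_\D(a^*)$ past $V^*$ and $\pi_\D(c)$ past $V$ via the intertwining relations, which yields
\begin{equation*}
    \rho(abc)=\langle\Omega_A\otimes\Omega_C|(\pi_A(a)\otimes\Id)\,\Lambda(b)\,(\Id\otimes\pi_C(c))|\Omega_A\otimes\Omega_C\rangle,
\end{equation*}
i.e.\ the claimed trace identity after identifying $\pi_A(a)$ and $\pi_C(c)$ with their natural extensions to $B(\H_{A B_A}\otimes\H_{B_C C})$.

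The main obstacle is verifying that $\Lambda(b)$ actually lands in $\B_A\bar\otimes\B_C$, rather than merely in the larger algebra $B(\H_{A B_A}\otimes\H_{B_C C})$. My plan is to invoke the commutation theorem for von~Neumann tensor products: $(\B_A\bar\otimes\B_C)'=\B_A'\bar\otimes\B_C'=\pi_A(\A)''\bar\otimes\pi_C(\C)''$, which reduces the claim to showing that $\Lambda(b)$ commutes with the generators $\pi_A(a)\otimes\Id$ and $\Id\otimes\pi_C(c)$ for $a\in\A$, $c\in\C$. Combining the adjoint intertwining identity $V^*\pi_\D(a)=(\pi_A(a)\otimes\Id)V^*$ with $\pi_\D(a)V=V(\pi_A(a)\otimes\Id)$ and the crucial commutation $[\pi_\D(\A),\pi_\D(\B)]=0$, one obtains
\begin{equation*}
    (\pi_A(a)\otimes\Id)\Lambda(b)=V^*\pi_\D(ab)V=V^*\pi_\D(ba)V=\Lambda(b)(\pi_A(a)\otimes\Id),
\end{equation*}
and the symmetric argument with $\C$ in place of $\A$ finishes the proof.
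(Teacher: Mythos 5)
Your proof is correct and follows essentially the same route as the paper's: you construct the same isometry $V$ (the paper obtains it via GNS uniqueness, you via the equivalent direct inner-product computation using factorization and commutativity), define $\Lambda(b)=V^*\pi_\D(b)V$, and verify that $\Lambda(b)\in\B_A\bar\otimes\B_C$ by showing it commutes with $\pi_A(\A)\otimes\Id$ and $\Id\otimes\pi_C(\C)$ and then invoking the commutation theorem for von~Neumann tensor products—exactly the paper's Lemma 5.
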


The spaces $\H_{A B_A}, \H_{B_C C}$ have previously appeared in the proof of Thm.~3.2 in Ref.~\cite{renou2022two} (as $V_{A B_L}, V_{B_R C}$).
In fact, this inspired our formulation of Thm.~\ref{thm:fact to biloc}.
We go beyond this prior result by showing that they give rise to a tensor product structure on the global Hilbert space.

To prove the theorem, consider in addition the GNS representation
$(\pi_{{\D}}, \H_{{\D}}, |\Omega_{{\D}}\rangle)$ of ${{\D}}$ associated with $\rho$.

\begin{lemma}\label{lem:tensor embedding}
	There is an isometric embedding $V: \H_{A B_A}\otimes\H_{B_C C}\to\H_{{\D}}$ which fulfills
	\begin{align}
		V\,|\Omega_{{A}}\rangle \otimes |\Omega_{{C}}\rangle 
		&=|\Omega_{{\D}}\rangle, \label{eqn:isometric omegas} \\
		V\pi_{{A}}(a)\otimes\pi_{{C}}(c) &= \pi_{{\D}}(ac) V. \label{eqn:intertwiner}
	\end{align}
\end{lemma}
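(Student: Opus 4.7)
The natural candidate for $V$ is the map sending a product of GNS cyclic vectors to the GNS vector of the product:
\begin{align*}
  V\bigl(\pi_{A}(a)|\Omega_{A}\rangle \otimes \pi_{C}(c)|\Omega_{C}\rangle\bigr)
  := \pi_{\D}(ac)\,|\Omega_{\D}\rangle,
\end{align*}
for $a\in\A$, $c\in\C$. The plan is to show (i) this is well-defined and isometric on the algebraic tensor product of the dense subspaces spanned by such vectors, (ii) it therefore extends uniquely to an isometry $V$ on all of $\H_{AB_A}\otimes\H_{B_C C}$, and (iii) the two required identities hold on the dense subspace (and hence everywhere by continuity).

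The key step is the isometry check, which is where the factorization hypothesis (\ref{eqn:factorization}) enters. On the tensor product side, for $a,a'\in\A$ and $c,c'\in\C$ one has
\begin{align*}
  \langle \pi_{A}(a')\Omega_{A}\otimes\pi_{C}(c')\Omega_{C}\,,\,
        \pi_{A}(a)\Omega_{A}\otimes\pi_{C}(c)\Omega_{C}\rangle
  = \rho(a'^{*}a)\,\rho(c'^{*}c),
\end{align*}
by the defining property of the GNS representations of $\A$ and $\C$. On the $\H_\D$ side,
\begin{align*}
  \langle \pi_\D(a'c')\Omega_\D\,,\,\pi_\D(ac)\Omega_\D\rangle
  = \rho\bigl(c'^{*}a'^{*}ac\bigr)
  = \rho\bigl((a'^{*}a)(c'^{*}c)\bigr),
\end{align*}
where in the last step I use that $\A$ and $\C$ commute in $\D$, so $a'^{*}a\in\A$ and $c'^{*}c\in\C$ can be reordered. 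Now the factorization condition (\ref{eqn:factorization}) identifies the right-hand side with $\rho(a'^{*}a)\rho(c'^{*}c)$, matching the tensor product computation. This simultaneously proves well-definedness (any linear combination of the left-hand vectors that equals zero also annihilates the right-hand combination) and the isometry property. Cyclicity of $|\Omega_A\rangle$ and $|\Omega_C\rangle$ gives density of the domain in $\H_{AB_A}\otimes\H_{B_C C}$, so $V$ extends uniquely by continuity.

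Equation (\ref{eqn:isometric omegas}) then follows by taking $a=\Id$, $c=\Id$. For the intertwining relation (\ref{eqn:intertwiner}), I evaluate both sides on a typical vector $\pi_A(a')\Omega_A\otimes\pi_C(c')\Omega_C$: the left-hand side produces $\pi_\D(aa'\,cc')\Omega_\D$, while the right-hand side produces $\pi_\D(a c\, a' c')\Omega_\D$, and these coincide because $\A$ and $\C$ commute inside $\D$. Linearity and continuity then give (\ref{eqn:intertwiner}) on the full tensor product. The only real subtlety is the isometry check, and its resolution is exactly the combination of $[\A,\C]=0$ inside $\D$ with the product-state assumption on $\rho|_{\A\cdot\C}$ — without either ingredient the inner products would not match.
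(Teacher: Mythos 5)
Your proof is correct and follows essentially the same route as the paper: both observe that the factorization condition together with commutativity of $\A$ and $\C$ inside $\D$ makes the relevant inner products agree, from which the isometry $V$ on the dense span of $\pi_A(a)\Omega_A\otimes\pi_C(c)\Omega_C$ follows. The only difference is packaging: the paper invokes the uniqueness of the GNS representation (citing Kadison--Ringrose, Prop.~4.5.3) after checking the states agree, while you unroll that uniqueness argument and verify the isometry by direct computation.
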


Equation~(\ref{eqn:intertwiner}) says that $V$ intertwines $\pi_A\otimes\pi_C$ and $\pi_{\D}$ as representations of the $C^*$-algebra generated by $\A$ and $\C$.

\begin{proof}
	The factorization property (\ref{eqn:factorization}) implies that $|\Omega_{{A}}\rangle\otimes|\Omega_{{C}}\rangle$ and $|\Omega_{{\D}}\rangle$ induce the same state on the C$^*$-algebra generated by $\A$ and $\C$:
	\begin{align*}
		&\langle\Omega_{{A}}|\otimes\langle\Omega_{{C}}|\,(\pi_{{A}}(a)\otimes\pi_{{C}}(c))\,|\Omega_{{A}}\rangle \otimes |\Omega_{{C}}\rangle  \\
		=&\rho(a)\rho(c) = \rho(ac) 
		=\langle \Omega_{{\D}}| \pi_{{\D}}(ac)|\Omega_{{\D}}\rangle.
	\end{align*}
	By the uniqueness property of the GNS construction \cite[Proposition~4.5.3]{kadison1983fundamentals}, there exists a unitary 
	\begin{align*}
		V: \H_{A B_A}\otimes\H_{B_C C}\to \overline{\pi_{{\D}}({\A}{\C})|\Omega_{{\D}}\rangle} =: \K
	\end{align*}
	such that
	\begin{align*}
		V |\Omega_A\rangle\otimes|\Omega_C\rangle &= |\Omega_\D\rangle, \\
		V \pi_{{A}}(a)\otimes\pi_{{C}}(c) V^* 
																							&=  \pi_{\D}(a c) \upharpoonright\K,
	\end{align*}
	where the final symbol denotes the restriction of $\pi_{\D}$ to $\K=\range V$.
	The advertised intertwining relation follows by multiplying the last line with $V$ from the right and finally re-interpreting $V$ as a map to all of $\H_{\D}$.
\end{proof}

\begin{lemma}
	It holds that
	\begin{align*}
		V^*\pi_{{\D}}({{\B}}) V \subset 
	{{\B}}_{{A}} \bar\otimes {{\B}}_{{C}}.
	\end{align*}
\end{lemma}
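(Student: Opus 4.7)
My plan is to reduce the claim to Tomita's commutation theorem for von~Neumann tensor products, which states that for any von~Neumann algebras $M\subset B(\H_1)$ and $N\subset B(\H_2)$ one has $(M\,\bar\otimes\,N)'=M'\,\bar\otimes\,N'$. Applying this with $M=\pi_{A}(\A)''$ and $N=\pi_{C}(\C)''$, and using that $M'=\pi_A(\A)'=\B_A$ and $N'=\pi_C(\C)'=\B_C$ by definition of $\B_A$ and $\B_C$, the desired inclusion becomes
\begin{align*}
V^*\pi_{\D}(\B)V \subset \bigl(\pi_{A}(\A)''\,\bar\otimes\,\pi_{C}(\C)''\bigr)'.
\end{align*}
By weak-operator density of the algebraic tensor product $\pi_A(\A)\otimes_{\text{alg}}\pi_C(\C)$ inside $\pi_A(\A)''\,\bar\otimes\,\pi_C(\C)''$, together with the WOT-continuity of left and right multiplication by a fixed bounded operator, it in turn suffices to check that $V^*\pi_{\D}(b)V$ commutes with every elementary tensor $\pi_{A}(a)\otimes\pi_{C}(c)$, $a\in\A$, $c\in\C$.

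For the commutation itself I plan to combine three ingredients: (i) the intertwining relation (\ref{eqn:intertwiner}), namely $V(\pi_A(a)\otimes\pi_C(c))=\pi_\D(ac)V$; (ii) its adjoint version $(\pi_A(a)\otimes\pi_C(c))V^*=V^*\pi_\D(ac)$, obtained by taking adjoints of (\ref{eqn:intertwiner}) and substituting $a\mapsto a^*$, $c\mapsto c^*$, where one uses that $a$ and $c$ commute in $\D$; and (iii) the fact that $\pi_\D(b)$ commutes with $\pi_\D(ac)$ because $\B$ commutes with both $\A$ and $\C$ in $\D$ by hypothesis. Chaining these yields
\begin{align*}
(V^*\pi_\D(b)V)(\pi_A(a)\otimes\pi_C(c))
&= V^*\pi_\D(b)\pi_\D(ac)V \\
&= V^*\pi_\D(ac)\pi_\D(b)V \\
&= (\pi_A(a)\otimes\pi_C(c))(V^*\pi_\D(b)V),
\end{align*}
which is exactly what is needed.

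The main potential obstacle, in my view, is not the computation itself, which is essentially one line, but the correct invocation of the density and bicommutant machinery on which the reduction relies. One must verify that $\bar\otimes$ is indeed the weak-operator closure of the algebraic tensor product (the convention used in the paper) and that Tomita's commutation theorem is applicable to the von~Neumann algebras $\pi_A(\A)''\subset B(\H_{A B_A})$ and $\pi_C(\C)''\subset B(\H_{B_C C})$. Both are standard facts from operator algebra theory, so once they are invoked the proof of the lemma is complete.
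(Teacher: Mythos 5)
Your proof is correct and follows essentially the same route as the paper's: verify by a one-line computation (via the intertwining relation and its adjoint) that $V^*\pi_\D(b)V$ commutes with every elementary tensor $\pi_A(a)\otimes\pi_C(c)$, then pass from the algebraic tensor product to $\B_A\bar\otimes\B_C$ using density of the commutant, the bicommutant theorem, and the commutation theorem for von~Neumann tensor products. The only cosmetic difference is that you invoke "Tomita's commutation theorem" on $M=\pi_A(\A)''$, $N=\pi_C(\C)''$ directly, whereas the paper writes out the same chain $(\pi_A(\A)\otimes_{\mathrm{alg}}\pi_C(\C))'=(\pi_A(\A)''\bar\otimes\pi_C(\C)'')'=\pi_A(\A)'\bar\otimes\pi_C(\C)'$ step by step with citations; the content is identical.
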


\begin{proof}
	We first claim that 
	\begin{align*}
		V^*\pi_{{\D}}({{\B}}) V\subset (\pi_{{A}}({{\A}})\otimes_{\text{alg}}\pi_{{C}}({{\C}}))'.
	\end{align*}
	Indeed, for $a\in{{\A}}, b\in{{\B}}, c\in{{\C}}$, 
	Eq.~(\ref{eqn:intertwiner}) and its adjoint give
	\begin{align*}
		& [V^* \pi_{{\D}}(b) V, \pi_{A}(a)\otimes\pi_{C}(c)] \\
		=& 
		V^* \pi_{{\D}}(b) V
		\pi_{A}(a)\otimes\pi_{C}(c) 
		-
		\pi_{A}(a)\otimes\pi_{C}(c) 
		V^* \pi_{{\D}}(b) V
		\\
		=& 
		V^* \pi_{{\D}}(b) \pi_{\D}(ac) 
		V
		-
		V^* 
		\pi_{\D}(ac) 
		\pi_{{\D}}(b) V
		\\
		=& 
		V^* [\pi_{{\D}}(b),\pi_{\D}(ac) ] V 
		= 0.
	\end{align*}
	Now use the fact that the commutator of a set equals the commutator of its weak operator closure \cite[I.2.5.3]{blackadar2006operator},
	the Bicommutant Theorem \cite[Theorem~5.3.1]{kadison1983fundamentals},
	and the Commutation Theorem for von~Neumann algebras \cite[III.4.5.8]{blackadar2006operator} 
	to conclude
	\begin{align*}
		(\pi_{A}({{\A}})\otimes_{\text{alg}}\pi_{C}({{\C}}))' 
		&=
		(\pi_{A}({{\A}})\bar\otimes\pi_{C}({{\C}}))' \\ 
		&=
		(\pi_{A}({\A})''\bar\otimes\pi_{C}({\C})'')' \nonumber \\
		&=
		(\pi_{A}({\A})'''\bar\otimes\pi_{C}({\C})''')  \\ 
		&=
		(\pi_{A}({\A})'\bar\otimes\pi_{C}({\C})'). \nonumber
	\end{align*}
\end{proof}

\begin{proof}[Proof (of Theorem~\ref{thm:fact to biloc})]
	Set
	\begin{align*}
		\Lambda: b \mapsto V^* \pi_{{\D}}(b) V
	\end{align*}
	and compute, using Lemma~\ref{lem:tensor embedding} repeatedly,
	\begin{align*}
		&\langle\Omega_{A}|\langle\Omega_{C}| \, \pi_{A}(a) (V^*\pi_{{\D}}(b)V) \pi_{C}(c) |\Omega_{A}\rangle |\Omega_{C}\rangle \\
		=&
		\langle\Omega_{A}|\langle\Omega_{C}| \, V^*\pi_{\D}(a) \pi_{{\D}}(b) \pi_{\D}(c) V |\Omega_{A}\rangle |\Omega_{C}\rangle \\
		=&
		\langle\Omega_{\D}| \, \pi_{{\D}}(abc) |\Omega_{\D}\rangle = \rho(abc).
	\end{align*}
\end{proof}

After these preparations, we can now proceed to prove the equivalences claimed to hold in Sec.~\ref{sec:equivalences}.
The proof's chain of implications among the various models is visualized in Fig.~\ref{fig:implications}.
\begin{figure*} 
	\begin{align*}
		\left.
		\begin{array}{l}
		\text{tensor product} \\ 
	\text{commuting observables} \\ 
			\text{Renou-Xu} 
		\end{array}
		\right\}
		\Rightarrow
		\text{reduced}  
		\underset{\text{Thm.~\ref{thm:fact to biloc}}}{\Longrightarrow}
	\text{mixed} 
		\Rightarrow
		\left\{
		\begin{array}{l}
			\text{[if $\dim\A, \dim\C <\infty$] tensor product} \\
			\text{commuting observables} \\
		\text{Renou-Xu} 
		\end{array}
		\right.
	\end{align*}
	\caption{
		\label{fig:implications}
		Logical structure of the proof given in Sec.~\ref{sec:fac to biloc}.
		The equivalences claimed in Sec.~\ref{sec:equivalences} follow from this chain of implications among the various models of bilocal quantum correlations.
	}
\end{figure*}
The implication ``reduced model $\Rightarrow$ Renou-Xu model'' also follows from Thm.~3.2 of \cite{renou2022two}.

\begin{proof}[Proof (of the equivalences stated in Sec.~\ref{sec:equivalences})]
	\ \\ \textbf{Step 1}:
	We claim that if $p(\alpha\beta\gamma|xyz)$ is compatible with 
	the tensor product model of Def.~\ref{def:tensor}, the commuting observables model of Def.~\ref{def:commuting}, or the Renou-Xu model of Cor.~\ref{cor:renou-xu},
	then it is also compatible with the reduced model (Cor.~\ref{cor:reduced}).
	
	This is straightforward to verify, except perhaps for the Renou-Xu model, which we treat explicitly.
	Indeed, consider a Renou-Xu model realization for $p(\alpha\beta\gamma|xyz)$.
	Let $\A$ be the $C^*$-algebra generated by Alice's POVM elements, and likewise for Bob and Charlie.
	Let $\D$ be the $C^*$-algebra generated by $\A, \B, \C$.
	By assumption, there is a normalized vector $\ket\psi\in\H$, such that $PQ=|\psi\rangle\langle\psi|$.
	Let $\rho$ be the associated vector state $\rho(x)=\langle\psi|x|\psi\rangle$ on $\D$.
	We have now constructed all objects that enter the reduced model.
	Using the commutation relations of the Renou-Xu model, one verifies the factorization constraint (\ref{eqn:overt}) for all $a\in\A, c\in\C$:
	\begin{align*}
		\rho(ac)
		&=
		\tr(PQ a c) \\
		&=
		\tr(PPQQ ac) \\
		&=
		\tr(PQ a QP c) \\
		&=
		\langle\psi| a|\psi\rangle\langle\psi| c|\psi\rangle \\
		&=\rho(a)\rho(c)
	\end{align*}
	(a similar calculation appears in the proof of Thm.~3.2 of Ref.~\cite{renou2022two}).

	\textbf{Step 2}
	is to show that, 
	by Thm.~\ref{thm:fact to biloc}, 
	the reduced model implies the mixed model.
	Assume a reduced model description with elements $\A,\B,\C,\D,\rho, A_{\alpha|x}, \dots$ is given.
	They satisfy the assumptions of Thm.~\ref{thm:fact to biloc}, so that we can use the objects whose existence it guarantees in the construction of the mixed model.
	Indeed, one immediately verifies the properties of the mixed model from the choices
	\begin{alignat*}{3}
		\H_{AB_A}^{(\text{mix})} &= 
		\H_{AB_A}, &
		\H_{B_CC}^{(\text{mix})} &= \H_{B_C C}, \\
		\A^{(\text{mix})} &= \pi_{A}(\A),  &
		\C^{(\text{mix})} &= \pi_{C}(\C), \\
		\B_A^{(\text{mix})} &= \B_A,  &
		\B_C^{(\text{mix})} &= \B_C, \\ 
		A^{(\text{mix})}_{a|x}  &= \pi_A(A_{\alpha|x}), & 
		B^{(\text{mix})}_{b|y}  &= \Lambda(B_{\beta|y}), & 
		C^{(\text{mix})}_{c|z}  &= \pi_C(C_{\gamma|z}), \\
		\sigma^{(\text{mix})}_{A B_A} &= |\Omega_A\rangle\langle\Omega_A|, &
		\sigma^{(\text{mix})}_{B_C C} &= |\Omega_C\rangle\langle\Omega_C|.
	\end{alignat*}

	\textbf{Step 3}:
	The mixed model obviously implies the commuting operator model.
	It also implies the Renou-Xu model by setting 
	\begin{align*}
		P = |\Omega_{A}\rangle\langle\Omega_{A}|\otimes \Id_{\B_C \C}, \quad 
		Q = \Id_{\A \B_A}\otimes|\Omega_{C}\rangle\langle\Omega_{C}|
	\end{align*}
	(which is similar to the construction of their 
	operators $\rho, \sigma$ in the 
	proof of Thm.~3.2 of Ref.~\cite{renou2022two}).
\end{proof}

It remains to treat the finite-dimensional case, as advertised in Cor.~\ref{cor:finite}.
The proof combines the construction given in Thm.~\ref{thm:fact to biloc} with the well-known fact
(c.f.\ Ref.~\cite{tsirelson2007open,scholz2008tsirelson})
that for two-party correlations, a finite-dimensional commuting model (as in Def.~\ref{def:bipartite commuting}) implies a tensor product model (as in Def.~\ref{def:bipartite tp}).
Specifically, we will use the following reformulation of  Theorem~1 of Ref.~\cite{scholz2008tsirelson}:

\begin{lemma}
	\label{lem:tsirelson finite}
	If $\F,\G$ are mutually commuting $C^*$-algebras on a finite-dimensional Hilbert space $\H$, then there exist finite-dimensional Hilbert spaces $\H_\F, \H_\G$ and an isometric embedding 
	\begin{align*}
		W_{\F\G}: \H \to \H_\F\otimes\H_\G
	\end{align*}
  such that
	\begin{align*}
		W_{\F\G}\, \F\, W_{\F\G}^* &\subset B(\H_\F)\otimes \Id, \\
		W_{\F\G}\, \G\, W_{\F\G}^* &\subset \Id\otimes B(\H_\G).
	\end{align*}
\end{lemma}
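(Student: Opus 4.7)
The plan is to reduce the lemma to the Wedderburn structure theorem for finite-dimensional $C^*$-algebras, which decomposes any such algebra acting on $\H$ into matrix blocks supported on a tensor factor of $\H$. The strategy is to apply this decomposition first to $\F$, then to $\G$ acting on the resulting multiplicity spaces, and finally to assemble the ``host'' Hilbert spaces $\H_\F$ and $\H_\G$ by collecting the representation spaces with sufficient multiplicity.

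Concretely, the structure theorem applied to $\F$ gives an identification $\H \cong \bigoplus_i \H_i^{(L)} \otimes \H_i^{(R)}$ under which $\F$ acts as $\bigoplus_i B(\H_i^{(L)}) \otimes \Id$ with commutant $\F' = \bigoplus_i \Id \otimes B(\H_i^{(R)})$. Since $\F$ and $\G$ mutually commute, $\G \subset \F'$, so every element of $\G$ acts only on the multiplicity factors $\H_i^{(R)}$. Viewing $\G$ as a $C^*$-algebra on $\bigoplus_i \H_i^{(R)}$ and invoking the structure theorem a second time, I would further decompose each $\H_i^{(R)}$ into irreducible $\G$-representations $V_j$ with multiplicity spaces $U_{ij}$. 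Combining the two decompositions yields an $\F$-and-$\G$-equivariant identification $\H \cong \bigoplus_{i,j} \H_i^{(L)} \otimes V_j \otimes U_{ij}$ in which $\F$ acts on the $\H_i^{(L)}$ factors and $\G$ on the $V_j$ factors.

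I would then set $\H_\F := \bigoplus_i \H_i^{(L)} \otimes M_i$ and $\H_\G := \bigoplus_j V_j \otimes N_j$, choosing the auxiliary spaces $M_i, N_j$ large enough that each $U_{ij}$ embeds isometrically into $M_i \otimes N_j$; for instance $\dim M_i = \max_j \dim U_{ij}$ together with $\dim N_j = 1$ always suffices. Fixing such embeddings delivers a natural isometry $W: \H \to \H_\F \otimes \H_\G$ that realizes $\H$ as an $(\F \otimes \G)$-subrepresentation. By construction, the induced action of $\F$ on $\H_\F \otimes \H_\G$ is supported on the first tensor factor and that of $\G$ on the second, which gives the two containments claimed by the lemma.

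The main obstacle will be the bookkeeping in the second application of the structure theorem, where the $\G$-block decomposition must be reconciled with the preexisting $\F$-blocks -- note that the two do not in general align along a common tensor factorisation of $\H$ itself. A more delicate point is arranging the auxiliary multiplicity spaces and the embeddings $U_{ij} \hookrightarrow M_i \otimes N_j$ as \emph{product} subspaces, so that the compressions $W f W^*$ and $W g W^*$ acquire the strict tensor-product form demanded by the conclusion rather than merely intertwining the representations on $\H$ and on $\H_\F \otimes \H_\G$.
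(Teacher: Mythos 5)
The paper does not prove Lemma~\ref{lem:tsirelson finite}; it is quoted as a ``reformulation of Theorem~1 of Ref.~\cite{scholz2008tsirelson}'', so there is no in-paper argument to compare against. Your Wedderburn-based double decomposition $\H \cong \bigoplus_{i,j} \H_i^{(L)} \otimes V_j \otimes U_{ij}$ is the natural starting point and is essentially the right set-up for the Tsirelson/Scholz--Werner argument. However, the ``delicate point'' you flag in your final paragraph is not a bookkeeping nuisance: it is a genuine gap, and as written your construction does not close it. After choosing isometries $u_{ij}: U_{ij} \to M_i \otimes N_j$, the compression of $f = \bigoplus_i f_i \otimes \Id \otimes \Id$ on the $(i,j)$ block becomes $f_i \otimes \Id_{V_j} \otimes u_{ij}u_{ij}^*$, and the projection $u_{ij}u_{ij}^*$ in general factors across $M_i \otimes N_j$, so $W f W^*$ is \emph{not} of the form $X \otimes \Id_{\H_\G}$; your fallback $\dim N_j = 1$ removes the factorization problem for $\F$ but then dooms $\G$, since $W g W^*$ acquires a nontrivial projection on $M_i$.

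In fact no choice of embeddings can rescue the construction unless $W$ ends up being unitary. Apply the claimed containments to $f = g = \Id_\H$ (recall the paper's convention that subalgebras contain the ambient unit): $W W^* = W \Id W^*$ must lie in both $B(\H_\F) \otimes \Id$ and $\Id \otimes B(\H_\G)$, so $WW^* = P \otimes \Id = \Id \otimes Q$, which forces $P = Q = \Id$ and hence $WW^* = \Id$. Thus the literal conclusion already requires $\dim\H = \dim\H_\F \cdot \dim\H_\G$ together with a tensor factorization of $\H$ carrying $\F$ and $\G$ into the two factors --- something a generic pair of commuting algebras (e.g.\ two abelian algebras of diagonal matrices on $\CC^3$ that share no common tensor factorization) does not admit. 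What your decomposition \emph{does} yield, and what the paper actually needs downstream in the proof of Cor.~\ref{cor:finite}, is the intertwiner version of the statement: finite-dimensional $*$-representations $\pi_\F: \F \to B(\H_\F)$, $\pi_\G: \G \to B(\H_\G)$ (obtained by discarding the multiplicity factors $U_{ij}$ in each block) and an isometry $W: \H \to \H_\F \otimes \H_\G$ satisfying $W\,fg = \bigl(\pi_\F(f) \otimes \pi_\G(g)\bigr)W$ for all $f \in \F$, $g \in \G$. You should prove that weaker, correct statement rather than the compression form; your two-step Wedderburn decomposition does most of the work for it, and the last step is then to simply \emph{define} $\pi_\F(f) := \bigoplus_i f_i \otimes \Id_{M_i}$ and $\pi_\G(g) := \bigoplus_j g_j \otimes \Id_{N_j}$ and check the intertwining relation, instead of trying to force $W f W^*$ into product form.
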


In keeping with the notation of $C^*$-algebras, $W^*$ denotes the adjoint of $W$, i.e.\ the operator that would be denoted as $W^\dagger$ in physics notation.

\begin{proof}[Proof (of the equivalences for finite-dimensional models)]
	By the previous proof, all models imply the mixed model, where specifically $\H_{A B_A}, \H_{B_C C}$ arise from the GNS representation of the observable algebras $\A$ and $\C$ respectively.
	In particular, the dimensions of these Hilbert spaces are upper-bounded by the dimension of the associated algebras,
	which in turn can be chosen to be the ones generated by Alice's and Charlie's observables.
	All operators that enter the construction of the mixed model as laid out in Cor.~\ref{cor:mixed} are linear maps on the tensor product of these two Hilbert spaces and therefore finite-dimensional.
	The third step of the previous proof then gives finite-dimensional realizations in the commuting operator model and the Renou-Xu model.

	It remains to be shown that $p(\alpha\beta\gamma|xyz)$ can be realized in the Hilbert space tensor product model (Def.~\ref{def:tensor}), and in particular in one involving only finite-dimensional spaces.
	Let a finite-dimensional mixed model realization of $p(\alpha\beta\gamma|xyz)$
	with elements $\H_{A B_A}, \H_{B_C C}, \A, \B_A, \B_C, \B, A_{\alpha|x}, \sigma_{A B_A}, \dots$
	be given.
	Our strategy is to apply Lem.~\ref{lem:tsirelson finite} separately to $\A, \B_A$ and to $\B_C, \C$.
	First, choosing $\F=\A$ and $\G=\B_A$ in Lem.~\ref{lem:tsirelson finite} establishes the existence of two Hilbert spaces $\H_{\A}, \H_{\B_A}$ and an isometry 
	\begin{align*}
		W_{\A\B_A}: \H_{A B_\A} \to \H_{\A} \otimes \H_{\B_A}.
	\end{align*}
  These allow us to choose the first set of objects that will enter the tensor product model as
	\begin{align*}
		\H_A^{(\text{t.p.})}&=\H_{\A}, &
		\H_{B_A}^{(\text{t.p.})}&=\H_{\B_A}, \\
		A_{\alpha|x}^{(\text{t.p.})} &= W_{\A\B_A} A_{\alpha|x} W_{\A\B_A}^*, \\
		\sigma^{(\text{t.p.})}_{A B_A} &= W_{\A\B_A}\sigma_{A B_A} W_{\A\B_A}^*.
	\end{align*}
	An analogous procedure starting with $\B_C, \C$ gives
	\begin{align*}
		\H_{B_C}^{(\text{t.p.})}&=\H_{\B_C}, & \H_{C}^{(\text{t.p.})}&=\H_{\C}, \\
		C_{\gamma|z}^{(\text{t.p.})} &= W_{\B_C\C} C_{\gamma|z} W_{\B_C\C}^*, \\
		\sigma^{(\text{t.p.})}_{B_C C} &= W_{\B_C\C}\sigma_{B_C C} W_{\B_C\C}^*.
	\end{align*}
	Finally, with
	\begin{align*}
		B_{\beta|y}^{(\text{t.p.})} &= 
		(W_{\A\B_A}\otimes W_{\B_C\C} )
			B_{\beta|y} 
		(W_{\A\B_A}\otimes W_{\B_C\C})^*
	\end{align*}	
	it is straight-forward to verify the properties of the tensor product model.
\end{proof}

It is apparent from the proof that the condition in Cor.~\ref{cor:finite} can be slightly weakened.
Instead of demanding that the algebras $\A, \C$ be finite-dimensional, it is sufficient for the conclusions to hold that the GNS Hilbert space $\H_{AB_A}\otimes \H_{B_C C}$ associated with the restriction of the state to $\A \C$ is finite-dimensional.

\section{Two complete hierarchies for the reduced model} \label{sec:complete hierarchy}

In this section, we will construct complete hierarchies of relaxations for the reduced model defined in Cor.~\ref{cor:reduced}.
Most ingredients for this construction and the completeness proof have been developed in Ref.~\cite{ligthart2021convergent} (based on Refs.~\cite{pironio2010convergent, wolfe2019inflation, wolfe2021quantum, raggio1989quantum}), to which we will refer for technical details.

\subsection{Outline}

Let $\D$ be the \emph{universal $C^*$-algebra} 
(\cite[Sec.~II.8.3]{blackadar2006operator}
, \cite[Sec.~2.2]{ligthart2021convergent})
with generators 
\begin{align}\label{eqn:alt gens D}
	\G=\{\Id, A_{\alpha|x}, B_{\beta|y}, C_{\gamma|z}\}
\end{align}
and relations
\begin{align}\label{eqn:alt rels}
	& [A_{\alpha|x}, B_{\beta|y}] = 0, \qquad &\forall \alpha,\beta,x,y, \\
	& [B_{\beta|y}, C_{\gamma|z}] = 0, \qquad &\forall \beta,\gamma,y,z, \\
	& [C_{\gamma|z}, A_{\alpha|x}] = 0, \qquad &\forall \alpha,\gamma,x,z, \\
	& [\Id, X] = 0 \qquad &\forall X\in \G, \\
	& \Id X = X \Id = X \qquad &\forall X \in \G, \\
& X^* = X \succeq 0, \qquad &\forall X \in \G, \\
& \sum_\alpha A_{\alpha|x} = \sum_\beta B_{\beta|y} = \sum_\gamma C_{\gamma|z} = \Id & \forall x,y,z. \label{eqn:alt rels end}
\end{align}
In a precise sense \cite{blackadar2006operator,ligthart2021convergent}, 
$\D$
is the direct sum of all possible realizations of this algebra as operators on Hilbert spaces.
Let $K(\D)$ be the set of all states on $\D$.

We aim to solve the following optimization problem: 
\begin{align}
	\begin{split} \label{eq:bilocal_opt}
		f^* = \min_{\rho \in K(\D)} \ & \sum_{\alpha, \beta, \gamma, x,y,z} \left( \rho(A_{\alpha|x} B_{\beta|y} C_{\gamma|z} ) - p(\alpha \beta \gamma|x y z) \right)^2\\
		\text{s. t. } \quad
		& \rho(ac) - \rho(a) \rho(c) = 0  \qquad \forall a \in \A,\ c \in \C
	\end{split}
\end{align}
The objective function of this problem represents the minimal 2-norm distance between a quantum realization of the reduced model of the bilocal scenario and the observed statistics. 
We accept that the correlations can arise from a reduced model if $f^* = 0$ (or at most some small $\varepsilon$ that represents numerical and statistical tolerances).
By Theorem \ref{thm:fact to biloc} this means that such correlations can also arise in the mixed model, the commuting observables model and the Renou-Xu model of the bilocal scenario.
If $f^* > 0$, the correlations cannot have been produced in any of the models, including the tensor product one.

The problem \eqref{eq:bilocal_opt} is ``polynomial'' in two different ways:
The operators 
$A_{\alpha|x} B_{\beta|y} C_{\gamma|z}$ 
and $ac$ are (norm limits of) non-commutative polynomials in the generators,
while the objective function and the constraints are second order polynomials in the state.

As described in Ref.~\cite{ligthart2021convergent}, we will use two different techniques to deal with these non-linearities:
\begin{enumerate}
	\item
	Non-commutative polynomial optimization (NPO) \cite{pironio2010convergent} provides a hierarchy of SDP relaxations for optimizing over \emph{linear} functions on states of the universal algebra $\D$,
	subject to linear constraints.
	Its completeness follows from the GNS construction.
	\item
		By passing to their \emph{polarizations}, one can interpret the polynomial functions on $K(\D)$ as linear functions on symmetric product states on multiple copies of $\D$.
	Such states are constructed from symmetric extensions of states on $\D$ and completeness follows from a suitable quantum de~Finetti theorem \cite{ligthart2021convergent}.
\end{enumerate}

In Sec.~\ref{sec:polarization} below, we lay out how to use the results of Ref.~\cite{ligthart2021convergent} to construct a converging hierarchy of SDP relaxations for polynomial optimization problems over algebras given in terms of generators and relations.
While we focus on the bilocal scenario, the techniques can be straightforwardly adapted to general algebras and polynomials.
We call this approach the \emph{polarization hierarchy}.
In Sec.~\ref{sec:inflation hierarchy}, we describe a slightly different approach more closely related to quantum inflation \cite{wolfe2021quantum}, which we also prove to be complete.

\subsection{Polarization hierarchy}
\label{sec:polarization}

To define the polarization hierarchy, choose a \emph{level} $n\in\NN$ and consider $n$ copies of the generators:
\begin{align}\label{eqn:alt gens Dn}
	\G^{n}=\{\Id, A_{\alpha|x}^{(i)}, B_{\beta|y}^{(i)}, C_{\gamma|z}^{(i)}\} \qquad i \in 1, \dots n.
\end{align}
Relations analogous to those in Eqs.~(\ref{eqn:alt rels})-(\ref{eqn:alt rels end}) are imposed for each $i$, together with relations stating that operators for different values of the superscript $i$ commute.
The resulting universal $C^*$-algebra $\D^n$ is the ``largest $C^*$-algebra generated from $n$ commuting copies of $\D$'', or, more precisely, the \emph{maximal $C^*$-tensor product} $\D^n = \D^{\otimes_{\max} n}$ \cite{bratteli2012operator, ligthart2021convergent}.

We note that this algebra is closely related to the algebra that is constructed for the most general version of the quantum inflation technique \cite{wolfe2021quantum}. 
This technique works with an even larger algebra, where e.g.\ Bob's operators carry two indices $B_{\beta|y}^{(i,j)}$ that can be varied independently.
It will turn out that our simpler model is sufficient for the bilocal scenario.

On the $n$-th tensor product of $\D$, we can linearize $n$-th order polynomial functions on $K(\D)$ by passing to their polarization as follows \cite[Sec.~4.1.2]{ligthart2021convergent}:
With every state $\sigma\in K(\D)$ associate its $n$-fold symmetric product state $\Pi_\sigma^{n} \in K(\D^n)$ which is defined by its action on product operators in the obvious way:
\begin{align*}
	\Pi_\sigma^{n}(x_1\otimes\dots\otimes x_n) = \sigma(x_1) \dots \sigma(x_n)
\end{align*}
and extended to all of $\D^n$ by linearity and continuity.
Then
\begin{align*}
	&\sum_{\alpha, \beta, \gamma, x,y,z} \left( \sigma(A^{(1)}_{\alpha|x} B^{(1)}_{\beta|y} C^{(1)}_{\gamma|z} ) - p(\alpha \beta \gamma|x y z) \right)^2\\
	=&
	\Pi^{2}_\sigma
	\Big(
	\sum_{\alpha, \beta, \gamma, x,y,z} 
	A^{(1)}_{\alpha|x} B^{(1)}_{\beta|y} C^{(1)}_{\gamma|z} A^{(2)}_{\alpha|x} B^{(2)}_{\beta|y} C^{(2)}_{\gamma|z}\\ 
	&-2 p(\alpha\beta\gamma|xyz) A^{(1)}_{\alpha|x} B^{(1)}_{\beta|y} C^{(1)}_{\gamma|z} + p(\alpha\beta\gamma|xyz)^2 \Id
	\Big) \\
	=:&\Pi^{2}_\sigma(y_0),
\end{align*}
where $y_0$ is the element of $\D^2$ on which $\Pi_\sigma^2$ is evaluated.

Similarly, one can turn the independence constraint of \eqref{eq:bilocal_opt} into a linear constraint on two inflation levels.
However, it will turn out that for the completeness proof, it is necessary to impose constraints that are bounded from below and attain their minimal value on the feasible set of states.
We will thus formulate the factorization constraints as
\begin{align*}
	(\sigma(ac) -\sigma(a)\sigma(c))^2 = 0,
\end{align*}
so that the polarization becomes
\begin{align*}
	\Pi^{4}_\sigma (y_{ac}) = 0,
\end{align*}
where
\begin{multline*}
	y_{ac} := a^{(1)} c^{(1)} a^{(2)} c^{(2)} -2 a^{(1)} c^{(1)} a^{(2)} c^{(3)}\\ 
	+ a^{(1)} c^{(2)} a^{(3)} c^{(4)}
\end{multline*}
Here, the indices indicate which copies of the POVM elements are used to generate the operator, e.g.~$a^{(2)}$ can be written as (the norm limit of) a polynomial in the generators $\{\Id, A^{(2)}_{\alpha|x}\}$.
In this way, both the polynomial objective function and the polynomial constraints correspond to the linear pairing between operators $y_0, y_{ac}\in\D^4$ and symmetric product states in $K(\D^4)$.

More generally, given a degree $m$ polynomial $q$ whose action on states is bounded from below by 0, one can optimize over polynomial constraints of the form
\begin{align*}
	q(\sigma) = 0,
\end{align*}
by passing to the polarization $y_q \in \D^m$ of $q$.

Unfortunately, the set of symmetric product states is not an affine subset of state space, which means that the NPO method cannot directly optimize over it.
To get around this restriction, we will combine three tricks.
First, realize that NPO can optimize over the set of \emph{all} symmetric states.
Indeed, the symmetric group $S_n$ acts on $\D^n$ by permuting the indices of the generators, and a state $\rho\in K(\D^n)$ is \emph{symmetric} if it satisfies the \emph{linear} constraints $\rho(\pi(x)) = \rho(x)$ for every $x\in\D^n, \pi\in S_n$.
Second, in both quantum and classical probability \cite{diaconis1980finite, diaconis1987dozen, raggio1989quantum, caves2002unknown, navascues2020inflation, ligthart2021convergent}, there is a well-known family of statements collectively known as \emph{de~Finetti theorems} that show that symmetric states on infinitely many copies are a convex combination of symmetric product states.
In our particular case, ``infinitely many copies'' can be made rigorous as the \emph{inductive limit of maximal $C^*$-tensor products}.
The following de~Finetti theorem, adapted to this setting, is proven in Ref.~\cite{ligthart2021convergent}

\begin{theorem}[Max tensor product Quantum de Finetti Theorem \cite{ligthart2021convergent} ]\label{thm:finetti maximale}
	Let $\rho\in K(\D^\infty)$ be a symmetric state on an infinite maximal tensor product
	\begin{align*}
		\D^\infty = \lim_{n\to\infty} \D^{\otimes_{\max} n}.
	\end{align*}
	Then there exists a unique probability measure $\mu$ over states on $\D$ such that 
	for all $x\in\D^\infty$,
	\begin{align}\label{eqn:de finetti rep}
		\rho(x) = \int_{K(\D)} \Pi^{\infty}_\sigma(x) \,\mathrm{d}\mu(\sigma),
				\end{align}
			    where $\Pi_\sigma^{\infty}$ is the infinite symmetric product state on $\D^\infty$ associated with the state $\sigma$ on $\D$.
\end{theorem}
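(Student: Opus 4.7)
The plan is to prove this quantum de~Finetti theorem by Choquet decomposition on the weak-* compact convex set $K_{\mathrm{sym}}(\D^\infty)$ of symmetric states, combined with an ergodic identification of its extreme points as the product states $\Pi^\infty_\sigma$.

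First, I would lay out the topological setup. Because $\D$ is a universal $C^*$-algebra on countably many generators, each $\D^{\otimes_{\max} n}$ is norm-separable, and hence so is the inductive limit $\D^\infty$. Consequently the state space $K(\D^\infty)$ is weak-* compact and metrizable, and $K_{\mathrm{sym}}(\D^\infty)$ is a compact metrizable convex subset to which the Choquet-Bishop-de~Leeuw theorem applies. I would also verify that the candidate extreme points $\Pi^\infty_\sigma$ are well-defined on the maximal tensor product: given $\sigma$ with GNS triple $(\H_\sigma,\pi_\sigma,\ket{\Omega_\sigma})$, the countably many mutually commuting copies of $\pi_\sigma$ acting on $\bigotimes \H_\sigma$ extend, by the universal property of $\otimes_{\max}$, to a representation of $\D^\infty$, and the vector $\bigotimes \ket{\Omega_\sigma}$ implements the desired product state.

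The core step is to identify the extreme points of $K_{\mathrm{sym}}(\D^\infty)$ with the image of $\sigma\mapsto\Pi^\infty_\sigma$. The easy direction is that product states are extreme, which follows from a clustering argument: if $\Pi^\infty_\sigma = \tfrac12(\omega_1+\omega_2)$ with $\omega_i$ symmetric, then symmetry together with the factorization of $\Pi^\infty_\sigma$ across tensor slots propagates through arbitrary words and forces $\omega_1=\omega_2=\Pi^\infty_\sigma$. For the harder converse I would apply a noncommutative mean ergodic theorem to the $S_\infty$-action, producing a conditional expectation onto the invariant subalgebra in the GNS representation of an extreme $\rho$; ergodicity (equivalent to extremality) collapses this conditional expectation to a scalar on the cyclic vector, and the commutation of observables living on different tensor slots in $\D^\infty$ then yields $\rho(x^{(1)} y^{(2)}) = \rho(x^{(1)})\rho(y^{(1)})$ for single-site $x,y\in\D$. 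Iterating over arbitrary words recovers $\rho = \Pi^\infty_\sigma$, with $\sigma$ the restriction of $\rho$ to the first tensor factor.

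Uniqueness of $\mu$ is then a consequence of the fact that $\sigma\mapsto\Pi^\infty_\sigma$ is a weak-* homeomorphism onto its image (the inverse is given by restriction to the first factor), so the Choquet measure concentrated on the extreme boundary pulls back to a unique probability measure on $K(\D)$. The main obstacle I foresee is the extremality characterization in the maximal-tensor-product setting: the classical Størmer- and Raggio-Werner-style proofs are most naturally written for the spatial (minimal) tensor product, where clustering is built into the Hilbert space structure. The saving feature is that only product states need to be realized in the decomposition, and these agree on $\otimes_{\max}$ and $\otimes_{\min}$, so the ergodic argument can be carried out at the level of product states without ever leaving the max-tensor-product world.
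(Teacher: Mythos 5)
The paper does not prove Theorem~\ref{thm:finetti maximale} itself: it explicitly cites Ref.~\cite{ligthart2021convergent} for the proof, so there is no in-paper argument to compare against. Your route --- Choquet--Bishop--de~Leeuw decomposition of the compact convex metrizable set $K_{\mathrm{sym}}(\D^\infty)$, together with a mean-ergodic identification of the extreme symmetric states with the product states $\Pi^\infty_\sigma$ --- is the standard strategy for such quantum de~Finetti theorems and, to the best of my knowledge, is essentially the strategy employed in the cited reference (which in turn adapts Raggio--Werner~\cite{raggio1989quantum} to the maximal tensor product). Your remark that the extremality/factorization step is insensitive to the choice of cross-norm because product states factor through $\otimes_{\min}$ and hence are states on every intermediate completion is the right observation, and it is what lets the Raggio--Werner-style argument carry over to $\otimes_{\max}$.

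However, the uniqueness argument as you have written it has a gap. You deduce uniqueness of $\mu$ from the fact that $\sigma\mapsto\Pi^\infty_\sigma$ is a homeomorphism onto the extreme boundary; but this only shows that, \emph{given} a boundary Choquet measure for $\rho$, its pullback to $K(\D)$ is determined. It does not establish uniqueness of the Choquet measure itself. The Choquet theorem gives existence of a representing measure supported on the extreme points; uniqueness of that measure requires the additional input that $K_{\mathrm{sym}}(\D^\infty)$ is a Choquet simplex. Two standard ways to close this gap are available: (i) invoke the simplex property for $S_\infty$-invariant states (the symmetric group of finite permutations is locally finite, hence amenable, and the tensor-power system is $G$-abelian, so Bratteli--Robinson-type results apply); or (ii) bypass the simplex property entirely by observing that the moments $\int \sigma(a_1)\cdots\sigma(a_k)\,\mathrm{d}\mu(\sigma)$ are fixed by $\rho$ through $\rho(a_1^{(1)}\cdots a_k^{(k)})$, and that these moment functions, by Stone--Weierstrass, generate a dense subalgebra of $C(K(\D))$, so that $\mu$ is determined by $\rho$. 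Either patch is routine, but one of them needs to be in the proof.
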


The third trick is to choose the polynomial constraints in such a way that they demand that point-wise non-negative polynomials are set to $0$.
If such an extremal condition is satisfied by a (continuous, as in Eq.~(\ref{eqn:de finetti rep})) convex combination, then it must in fact be satisfied almost surely.
Applying this to the constraints and the objective function, we will see that in our case the $\Pi^\infty_\sigma$ are almost surely a feasible solution of \eqref{eq:bilocal_opt} that attains the minimum $f^\infty = \lim_{n \to \infty} f^n$ of the relaxation \eqref{eq:bilocal_inflation_hierarchy} below.

Let us now formulate the NPO hierarchy and its convergence proof more precisely.
Define $\A^n$ to be the subalgebra of $\D^n$ that consists of Alice's operators and similar for Bob and Charlie.
Let $U$ be a countable basis of $\A^n$.
Usually this basis is taken to be the set of all words in Alice's POVM elements.
Define $V$ for Bob and $W$ for Charlie in a similar way.
For $n \geq 4$ the hierarchy of NPO problems is then given by
\begin{align}
	\begin{split} \label{eq:bilocal_inflation_hierarchy}
		f^n = \min_{\rho \in K({\D^n})} \ &\rho(y_0)\\
		\text{s. t. } \quad 
		& \rho(\pi(abc)) = \rho(abc), \\ 
		& \rho(y_{ac}) = 0, \\
		& \forall \pi\in S_n, a\in U, b\in V, c\in W.
	\end{split}
\end{align}
Each of these NPO problems can in turn be solved via the complete hierarchy of SDP relaxations introduced in Ref.~\cite{navascues2008convergent,pironio2010convergent}, where we have used the formulation of NPO problems in Ref.~\cite{ligthart2021convergent}.

The following theorem then states that \eqref{eq:bilocal_opt} and \eqref{eq:bilocal_inflation_hierarchy} are equivalent in the limit.
\begin{theorem} \label{thm:convergence}
Let $f^\infty = \lim_{n\to \infty} f^n$. It holds that $f^\infty = f^*$.
\end{theorem}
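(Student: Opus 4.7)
The plan is to establish the two inequalities $f^\infty \le f^*$ and $f^\infty \ge f^*$ separately. For the easier direction, I would take any $\sigma \in K(\D)$ feasible for \eqref{eq:bilocal_opt} and lift it to the symmetric product state $\Pi^n_\sigma \in K(\D^n)$. By construction this state is $S_n$-symmetric, and the computations sketched when $y_0$ and $y_{ac}$ were introduced give $\Pi^n_\sigma(y_{ac}) = (\sigma(ac) - \sigma(a)\sigma(c))^2 = 0$ and $\Pi^n_\sigma(y_0) = \sum_{\alpha\beta\gamma xyz}(\sigma(A_{\alpha|x}B_{\beta|y}C_{\gamma|z}) - p(\alpha\beta\gamma|xyz))^2$, so $\Pi^n_\sigma$ is feasible at level $n$ with the same objective value as $\sigma$. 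Combined with the observation that restricting a feasible state at level $n+1$ to the first $n$ tensor factors yields a feasible state at level $n$ with the same objective (since $y_0, y_{ac}$ live in $\D^4$ and $S_n$ is a subgroup of $S_{n+1}$), this gives $f^n \le f^{n+1} \le f^*$; the monotone limit $f^\infty$ therefore exists and satisfies $f^\infty \le f^*$.

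For the converse direction, I would fix for each $n \ge 4$ a minimizer $\rho^n \in K(\D^n)$ of \eqref{eq:bilocal_inflation_hierarchy}, which exists by weak-$*$ compactness of the state space together with weak-$*$ continuity of the linear objective and constraint functionals. I would then extend each $\rho^n$ to a state $\tilde\rho^n$ on $\D^\infty$ (any state on a unital $C^*$-subalgebra extends to a state on the ambient algebra) and extract a weak-$*$ convergent subsequence with limit $\rho^\infty \in K(\D^\infty)$. To verify that $\rho^\infty$ is $S_\infty$-symmetric it suffices to consider permutations $\pi$ of finite support, which lie in $S_n$ for all sufficiently large $n$; the identity $\rho^n(\pi(x)) = \rho^n(x)$ for any fixed $x \in \D^m$ then passes through the weak-$*$ limit. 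The same continuity argument shows that the constraints $\rho^n(y_{ac}) = 0$ are preserved, and $\rho^\infty(y_0) = \lim_n \rho^n(y_0) = f^\infty$.

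At this point Theorem~\ref{thm:finetti maximale} yields a probability measure $\mu$ on $K(\D)$ with $\rho^\infty(x) = \int_{K(\D)} \Pi^\infty_\sigma(x)\,d\mu(\sigma)$ for every $x\in\D^\infty$. The whole point of having polarized non-negative polynomials is that $\Pi^\infty_\sigma(y_{ac}) = (\sigma(ac) - \sigma(a)\sigma(c))^2 \ge 0$ pointwise in $\sigma$, so $\int(\sigma(ac) - \sigma(a)\sigma(c))^2\,d\mu(\sigma) = 0$ forces $\sigma(ac) = \sigma(a)\sigma(c)$ for $\mu$-a.s.\ $\sigma$. Intersecting the full-measure sets indexed by the countable basis $U \times W$, $\mu$-a.s.\ $\sigma$ is feasible for \eqref{eq:bilocal_opt}. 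Combining this with $\Pi^\infty_\sigma(y_0) = \sum_{\alpha\beta\gamma xyz}(\sigma(A_{\alpha|x}B_{\beta|y}C_{\gamma|z}) - p(\alpha\beta\gamma|xyz))^2 \ge f^*$ for such feasible $\sigma$, and integrating against $\mu$, gives $f^\infty = \int \Pi^\infty_\sigma(y_0)\,d\mu(\sigma) \ge f^*$, which together with the first paragraph establishes the theorem.

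The main technical obstacle I anticipate is the construction and manipulation of $\rho^\infty$ as a genuine state on the inductive-limit algebra $\D^\infty$: the Hahn--Banach extensions, weak-$*$ limits and verification of $S_\infty$-symmetry all happen on an object that is not itself a $C^*$-tensor product of finitely many factors, so care is needed to ensure that the hypotheses of Theorem~\ref{thm:finetti maximale} apply verbatim and that the almost-sure consequences of the polarization trick hold simultaneously for the countably many factorization constraints. Since Reference~\cite{ligthart2021convergent} develops essentially the same machinery in a closely related setting, I would expect the proof to proceed by adapting its results rather than redoing the technical work from scratch.
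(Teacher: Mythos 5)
Your proposal is correct and follows essentially the same route as the paper's proof: both directions rely on Banach--Alaoglu compactness, the de~Finetti representation of Theorem~\ref{thm:finetti maximale}, and the pointwise non-negativity of the polarized constraints $y_{ac}$ to extract an almost-surely feasible $\sigma$. The only cosmetic differences are that the paper lifts the level-$n$ minimizer to $\D^\infty$ by taking an infinite tensor power rather than an arbitrary Hahn--Banach extension, and it passes through the intermediate identity $\Pi^\infty_\sigma(y_0)=f^\infty$ a.e.\ before comparing with $f^*$, whereas you integrate the bound $\Pi^\infty_\sigma(y_0)\ge f^*$ directly.
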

\begin{proof}
The proof is very similar to that of Theorem~11 in Ref.~\cite{ligthart2021convergent}.

It is clear that
\begin{align} \label{eq:convergence ->}
	f^n \leq f^* \qquad \forall n,
\end{align}
since each level of the hierarchy \eqref{eq:bilocal_inflation_hierarchy} is a relaxation of the optimization problem \eqref{eq:bilocal_opt}.

For the converse direction, use NPO to construct a state $\omega_n$ on $\D^\infty$ for each level $n$ of the hierarchy by taking the infinite tensor product of an optimizing state of the optimization problem \eqref{eq:bilocal_inflation_hierarchy} at level $n$.

By the Banach-Alaoglu theorem applied to the state space $K(\D^\infty)$, this sequence admits a weak$^*$-convergent subsequence.
Let $\omega$ be its limit point.
Since each $\omega_n$ obeys the constraints of Eq.~\eqref{eq:bilocal_inflation_hierarchy}, so does $\omega$.
Hence, $\omega$ is a symmetric state on the algebra $\D^\infty$ and Theorem \ref{thm:finetti maximale} applies.
That is, $\omega$ can be written as
\begin{align}\label{eq:omega_deFinetti}
	\omega = \int \dd \mu(\sigma)\ \Pi^\infty_\sigma,
\end{align}
with $\mu$ a unique probability measure over states $\sigma \in K(\D)$ and $\Pi^\infty_\sigma$ an infinite product state on $\D^\infty$.

By construction, each of the $y_{ac}$ is non-negative on the product states $\Pi^\infty_\sigma$.
Therefore, since $\omega(y_{ac}) = 0$, and $\mu$ is a probability measure, it holds that 
\begin{align*}
	\Pi^\infty_\sigma(y_{ac}) = 0 \qquad \text{almost everywhere w.r.t. } \mu.
\end{align*}
That is, there exists a full measure subset $E \subset K(\D)$ such that for all $\sigma \in E$, it holds that $\Pi^\infty_\sigma(y_{ac}) = 0$.

Hence, each $\Pi^\infty_\sigma$ with $\sigma \in E$ defines a feasible state $\sigma$ for the optimization problem \eqref{eq:bilocal_opt} by restricting to one copy of the algebra $\D$. 
From this one can conclude that $\Pi^\infty_\sigma (y_0) \geq f^\infty$ for all $\sigma \in E$, for otherwise one could have taken $\omega$ to be the point measure on a state $\sigma'$ such that $\Pi_{\sigma'}(y_0) < f^\infty$.
This would contradict the fact that $f^\infty$ is a minimum.

Combining this with the fact that $\omega(y_0) = f^\infty$, it must hold that
\begin{align*}
	\Pi^\infty_\sigma(y_0) = f^\infty \qquad \text{almost everywhere w.r.t. } \mu \text{ on } E.
\end{align*}
I.e., there exists a set $F \subset E$ with full measure, such that for all $\sigma \in F$ it holds that $\Pi^\infty_\sigma(y_0) = f^\infty$.
Finally, we can conclude for any $\sigma \in F$
\begin{align} \label{eq:convergence <-}
	f^\infty = \Pi^\infty_\sigma(y_0) \geq f^*.
\end{align}
Combining Eqs.~\eqref{eq:convergence ->} and \eqref{eq:convergence <-} yields $f^\infty = f^*$, proving the theorem.
\end{proof}

\subsection{Inflation hierarchy} \label{sec:inflation hierarchy}

\begin{figure}
\centering
\includegraphics[width=0.7\linewidth]{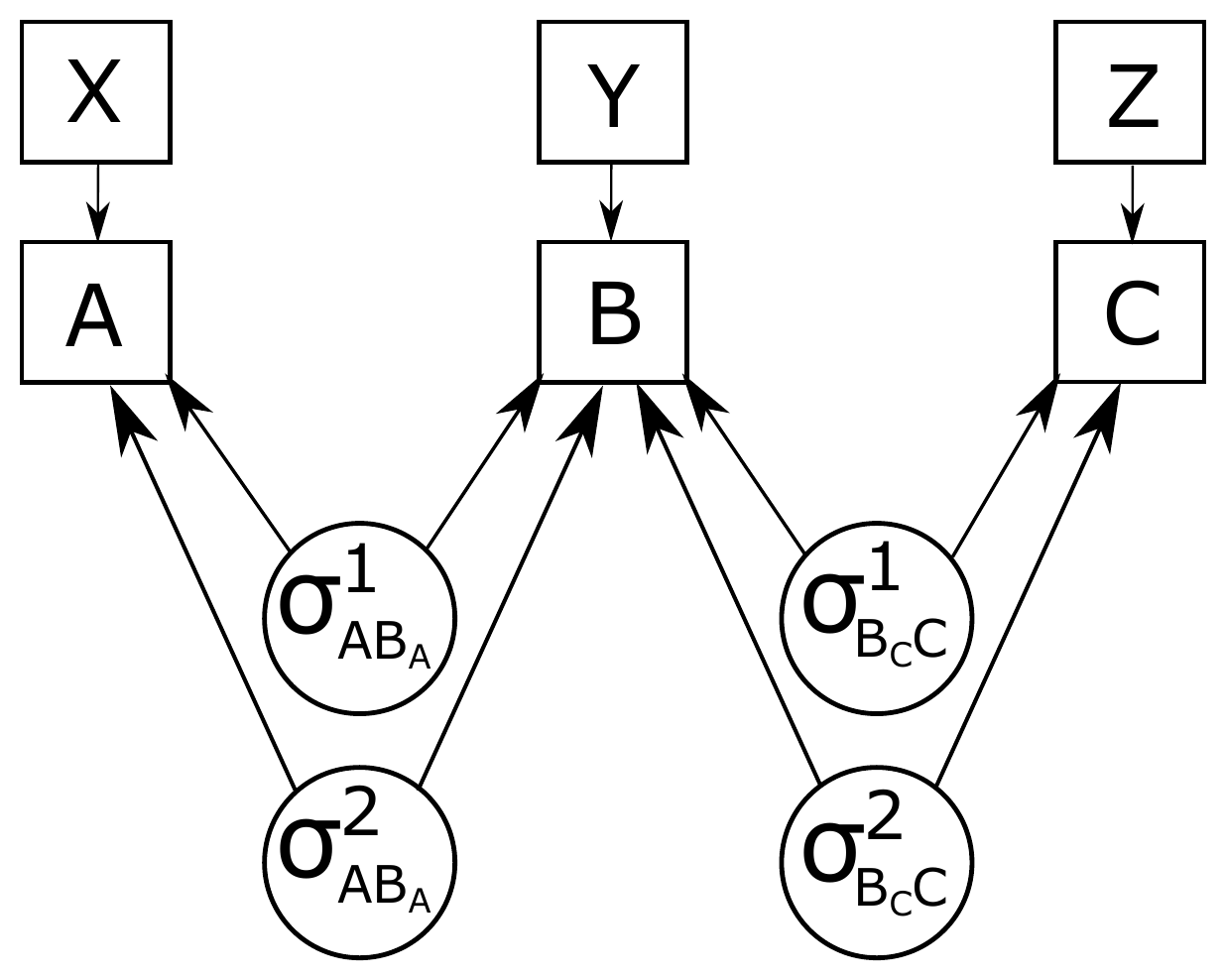} 
\caption{The level 2 inflation of the bilocal scenario. Each of the states $\sigma_{AB_A}$ and $\sigma_{B_C C}$ has been copied. The total state of the system is permutation symmetric under the exchange of each of these copies. The inflation technique builds on this observation.
}
\label{fig:bilocal_inflated}
\end{figure}

There exists a second convergent hierarchy that is more closely related to the quantum inflation hierarchy of Ref.~\cite{wolfe2021quantum}.
By showing convergence of such a hierarchy, we answer a question posed by Renou and Xu in Ref.~\cite{renou2022two}.
The hierarchy is very similar to that of Eq.~\eqref{eq:bilocal_inflation_hierarchy}, but instead of treating the independence constraints as polynomial conditions, they are enforced by imposing additional symmetries.
Loosely speaking,
these new symmetry constraints posit that copies of the state $\sigma_{AB_A}$ can be permuted independently of the copies of $\sigma_{B_C C}$,
see Fig.~\ref{fig:bilocal_inflated} for a visualization.
The advantage of this hierarchy over the polarization hierarchy is that the symmetry constraints can already be imposed at level 2 of the hierarchy.

In the notation introduced above Eq.~\eqref{eq:bilocal_inflation_hierarchy}, the level $n$ relaxation is given by
\begin{align}
\begin{split} \label{eq:hierarchy_symmetry}
	\tilde{f}^n = \min_{\rho \in K({\D^n})} \ &\rho(y_0)\\
		\text{s. t. } \quad 
		& \rho(\pi(abc)) = \rho(abc), \quad \forall \pi \in S_n\\
		& \rho(a \pi(c)) = \rho(ac),  \quad \forall \pi \in S_n.
\end{split}
\end{align}

\begin{theorem} \label{thm:convergence_inflation}
Let $\tilde{f}^\infty = \lim_{n\to\infty} \tilde{f}^n$. 
It holds that $\tilde{f}^\infty = f^*$.
\end{theorem}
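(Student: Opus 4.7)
The plan is to adapt the proof of Theorem~\ref{thm:convergence}, replacing its explicit polarization constraint $\rho(y_{ac})=0$ with a measure-theoretic derivation of factorization from the additional symmetry axiom $\rho(a\pi(c))=\rho(ac)$. The easy direction $\tilde{f}^n \leq f^*$ is immediate: any $\sigma\in K(\D)$ feasible for \eqref{eq:bilocal_opt} lifts via its symmetric power $\Pi^n_\sigma$ to a feasible point of \eqref{eq:hierarchy_symmetry} with the same objective value, so each level of \eqref{eq:hierarchy_symmetry} is a relaxation of \eqref{eq:bilocal_opt}.

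For the matching lower bound, I would first lift each optimizer $\rho_n$ of \eqref{eq:hierarchy_symmetry} to a state on the infinite maximal tensor product $\D^\infty$ (by tensoring with a fixed symmetric product state on the unused copies), then apply Banach--Alaoglu in $K(\D^\infty)$ to extract a weak$^*$-limit point $\omega$, and verify that $\omega$ inherits both the full $S_\infty$-symmetry and the independent $\C$-permutation symmetry $\omega(a\pi(c))=\omega(ac)$. Theorem~\ref{thm:finetti maximale} then yields a de~Finetti representation $\omega=\int \Pi^\infty_\sigma\,\dd\mu(\sigma)$.

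The central new step is to deduce that $\mu$ is supported on states $\sigma$ satisfying $\sigma(ac)=\sigma(a)\sigma(c)$. The two symmetries of $\omega$ combine to allow independent permutations of $\A$-copy and $\C$-copy labels within any monomial, so that applied to the three monomials comprising $y_{ac}$ one obtains
\begin{align*}
\omega(a^{(1)}a^{(2)}c^{(1)}c^{(2)}) &= \omega(a^{(1)}a^{(2)}c^{(1)}c^{(3)}) \\
&= \omega(a^{(1)}a^{(2)}c^{(3)}c^{(4)}),
\end{align*}
and hence $\omega(y_{ac})=0$. Evaluating this via de~Finetti and using $\Pi^\infty_\sigma(y_{ac})=(\sigma(ac)-\sigma(a)\sigma(c))^2\geq 0$ forces $\sigma(ac)=\sigma(a)\sigma(c)$ for $\mu$-almost every $\sigma$ and this fixed pair $(a,c)$. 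From this point the argument of Theorem~\ref{thm:convergence} carries over verbatim to give $\tilde{f}^\infty\geq f^*$.

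The main obstacle I anticipate is upgrading the ``per-pair'' $\mu$-almost-sure statement to a single $\mu$-full-measure set on which the factorization constraint holds for all $a\in\A, c\in\C$ simultaneously. Since the relevant $C^*$-algebras are separable (being generated by finitely many POVM elements), one picks a countable norm-dense subset of $\A\times\C$, intersects the corresponding countable family of full-measure sets, and extends the identity to all of $\A\times\C$ by weak$^*$-continuity of states; this is routine but deserves to be laid out carefully.
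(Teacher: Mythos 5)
Your proposal is correct and follows essentially the same route as the paper: lift optimizers to $\D^\infty$, extract a weak$^*$-limit $\omega$ that inherits both symmetries, apply the de~Finetti theorem, derive a.s.\ factorization, and close as in Theorem~\ref{thm:convergence}. In fact your derivation of $\omega(y_{ac})=0$ via the non-block $\C$-permutation is a worthwhile sharpening of the paper's terse sketch: the paper's displayed chain uses only the block permutation $\pi=(1,n{+}1)\cdots(n,2n)$, which yields $\int\Pi^\infty_\sigma(ac)\,\dd\mu=\int\Pi^\infty_\sigma(a)\Pi^\infty_\sigma(c)\,\dd\mu$ (equivalently $\int X^k\dd\mu=\int Y^k\dd\mu$ for $X=\sigma(ac)$, $Y=\sigma(a)\sigma(c)$), but by itself that does not produce the cross-term $\int XY\,\dd\mu$ needed to conclude $\int(X-Y)^2\dd\mu=0$; your middle identity $\omega(a^{(1)}a^{(2)}c^{(1)}c^{(2)})=\omega(a^{(1)}a^{(2)}c^{(1)}c^{(3)})$, obtained by permuting a single $\C$-index, supplies exactly that term. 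Two small remarks on the ``obstacle'' you flag: it is the same implicit step already present in the paper's proof of Theorem~\ref{thm:convergence} (the $y_{ac}$-constraints are imposed only over the countable sets $U,W$ there), and the passage from countably many pairs to all $(a,c)\in\A\times\C$ is cleanest via bilinearity of $(a,c)\mapsto\sigma(ac)-\sigma(a)\sigma(c)$ together with norm-continuity of $\sigma$ on the algebra, rather than weak$^*$-continuity.
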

\begin{proof}
We only give a short proof sketch, since the techniques are nearly identical to the proof of Theorem \ref{thm:convergence}.

Construct a state $\omega \in K(\D^\infty)$ as the limit of optimizing states of \eqref{eq:hierarchy_symmetry} (c.f. the proof of Theorem \ref{thm:convergence}).
By the de Finetti theorem this state has the form
\begin{align} \label{eq:omega_definetti_inflation}
\omega = \int \dd \mu (\sigma) \Pi^\infty_\sigma.
\end{align}

Fix one $n\in\NN$.
Using the cycle notation, define the permutation
\begin{align*}
	\pi &= (1,n+1) \, (2, n+2)\, \dots\, (n,2n),
\end{align*}
i.e.\ $\pi$ exchanges the 1st block of $n$ symbols with the 2nd block of $n$ symbols.
Using the additional symmetry constraints of $\omega$, when restricted to elements of Alice and Charlie, we see that for each $a \in \A^n$ and $c \in \C^n$, and for all $n$
\begin{align}
	\omega(ac) &= \omega(a \pi(c))  \label{eq:omega_sep_1}\\
	&= \int \dd \mu (\sigma)\ \Pi^\infty_\sigma (a \pi(c))  \label{eq:omega_sep_2}\\
	&= \int \dd \mu (\sigma)\ \Pi^\infty_\sigma(a) \Pi^\infty_\sigma(\pi(c))  \label{eq:omega_sep_3}\\
	&= \int \dd \mu (\sigma)\ \Pi^\infty_\sigma(a) \Pi^\infty_\sigma(c) \label{eq:omega_sep_4},
\end{align}
where the symmetry of $\omega$ was used in \eqref{eq:omega_sep_1}, Eq.~\eqref{eq:omega_definetti_inflation} was used for Eq.~\eqref{eq:omega_sep_2}, and in Eqs.~\eqref{eq:omega_sep_3} and \eqref{eq:omega_sep_4} it was used that each $\Pi^\infty_\sigma$ is a symmetric product state over disjoint inflation levels.
From this we can see that $\Pi^\infty_\sigma$ obeys the factorization constraint almost surely with respect to $\mu$.

The rest of the proof is now similar to that of Theorem \ref{thm:convergence}.
\end{proof}

We note that this result also proves convergence of the ``full'' quantum inflation hierarchy \cite{wolfe2021quantum} where Bob's POVMs have two separate indices: 
For each $n$, the NPO problem that describes such a full inflation level is a relaxation of problem \eqref{eq:bilocal_opt} that is at least as restrictive as the relaxation \eqref{eq:bilocal_inflation_hierarchy}. 
Hence, its optimal value lies between $f^n$ and $f^*$ for every $n$.

\section{Conclusion and discussion} \label{sec:conclusion}

In this paper we have shown the equivalence of several models of locality for the bilocal scenario.
In particular, we have shown that a reduced model of bilocality, in which only Alice and Charlie are supposed to be independent, is enough to reproduce exactly the bilocal quantum distributions in the commuting observables model.
Furthermore, if Alice's and Charlie's systems can be associated with a finite dimensional algebra, the correlations also coincide with the tensor product model.

Additionally, we have constructed two converging SDP hierarchies for the bilocal scenario, based on the above-mentioned classification.
The polarization hierarchy makes use of the fact that certain polynomial expressions in a state can be linearized on tensor powers of that state.
Here, this idea was applied to the factorization constraint between Alice and Charlie, but it can be applied to polynomials of higher order as well.
The second hierarchy is a form of the quantum inflation hierarchy.

In deriving these results, we have answered two open questions of Ref.~\cite{renou2022two}:
\begin{enumerate}
\item
whether the bilocal scenario allows for new insights into Tsirelson's problem: No.
\item
and whether the quantum inflation hierarchy is complete for the bilocal scenario: Yes.
\end{enumerate} 

Several follow-up questions suggest themselves.

One can ask whether it is possible to use the technique of Theorem \ref{thm:fact to biloc} to show that the quantum inflation hierarchy converges for other networks.
We believe Theorem \ref{thm:fact to biloc} can be adapted to the more general case of \emph{star networks}, in which one central party shares a bipartite quantum state with $n$ other parties, but no other connections are present.
Note that the bilocal scenario is a star network with $n=2$, where Bob acts as the central party.

The bilocal scenario is also a \emph{line network}, in which the parties are arranged in a line and share a bipartite quantum state with each of their neighbours.
It is less clear whether the technique can be extended to arbitrary line networks.

A numerical comparison can be made between the hierarchies suggested in this paper, the scalar extension hierarchy, and the quantum inflation hierarchy as originally suggested in Ref.~\cite{wolfe2021quantum}.
We leave such a comparison for later work.

\section{Acknowledgments}
We thank 
Johan \AA berg,
Mariami Gachechiladze,
Alejandro Pozas-Kerstjens,
Marc-Olivier Renou,
Xiangling Xu,
and
Julius Zeiss
for insightful discussions. 
This work has been supported by Germany's Excellence Strategy -- Cluster of Excellence Matter and Light for Quantum Computing (ML4Q) EXC 2004/1 -- 390534769.

{\bf Data sharing.} Data sharing is not applicable to this article as no new data were created or analyzed in this study.

\end{document}